\let\P\undef
\let\C\undef
\newtheorem{theorem}{Theorem}
\newtheorem{proposition}[theorem]{Proposition}
\newtheorem{lemma}[theorem]{Lemma}
\theoremstyle{definition}
\newtheorem{definition}[theorem]{Definition}
\theoremstyle{remark}
\newtheorem{example}[theorem]{Example}
\newtheorem{remark}[theorem]{Remark}
\renewcommand{\leq}{\leqslant}
\renewcommand{\geq}{\geqslant}
\newcommand{\titlerunning}{Presenting Finite Posets}
\newcommand{\authorrunning}{Samuel Mimram}
\title{\titlerunning}
\author{\authorrunning
\institute{CEA, LIST / École Polytechnique\thanks{This work was partially funded by the french ANR project CATHRE ANR-13-BS02-0005-02.}}
\email{\href{mailto:samuel.mimram@lix.polytechnique.fr}{samuel.mimram@lix.polytechnique.fr}}
}
\begin{document}
\maketitle

\begin{abstract}
  We introduce a monoidal category whose morphisms are finite partial orders,
  with chosen minimal and maximal elements as source and target
  respectively. After recalling the notion of presentation of a monoidal
  category by the means of generators and relations, we construct a presentation
  of our category, which corresponds to a variant of the notion of bialgebra.
\end{abstract}

String rewriting systems have been originally introduced by
Thue~\cite{thue1914probleme} in order to study word problems in monoids. A
string rewriting system $(\Sigma,R)$ consists of a set $\Sigma$, called the
\emph{alphabet}, and a set $R\subseteq\Sigma^*\times\Sigma^*$ of \emph{rules}.
The monoid $\Sigma^*/{\equiv_R}$, obtained by quotienting the free monoid
$\Sigma^*$ over $\Sigma$ by the smallest congruence (\wrt concatenation)
containing $R$, is called the monoid \emph{presented} by the rewriting system.
The rewriting system can thus be thought of as a small description of the
monoid, and the word problem consists in deciding whenever two words
$u,v\in\Sigma^*$ represent the same word, \ie are such that $u\equiv_Rv$. Now,
when the rewriting system is convergent, \ie both terminating and confluent,
normal forms provide canonical representatives of the equivalence classes: two
words $u,v\in\Sigma^*$ are equivalent by the congruence $\equiv_R$ if and only
if they have the same normal form, and the word problem can be thus be decided
in this case.

\begin{example}
  \label{ex:pres-monoid}
  Consider the rewriting system $(\Sigma,R)$ with $\Sigma=\set{a,b}$ and
  $R=\set{(ba,ab),(bb,\varepsilon)}$, where $\varepsilon$ denotes the empty
  word. This rewriting system is easily shown to be terminating, and the two
  critical pairs can be joined:
  \[
  \vxym{
    &\ar[dl]bba\ar[dr]&\\
    bab\ar@{.>}[r]&abb\ar@{.>}[r]&a
  }
  \qquad\qquad\qquad
  \vxym{
    bbb\ar@/_/[d]\ar@/^/[d]\\
    bb
  }
  \]
  the system is thus convergent. Normal forms are words of the form $a^n$ or
  $a^nb$, with $n\in\N$, and from this it is easy to deduce that the rewriting
  system presents the additive monoid $\N\times(\N/2\N)$: there is an obvious
  bijection between the set of normal forms and the elements of the monoid, and
  it remains to be checked that this bijection is compatible with product and
  units.
\end{example}

Starting from this point of view, it is natural to wonder whether the notion of
rewriting system can be extended to more general settings, in particular to
provide presentations for $n$-categories (a monoid can be seen as the particular
case of a $1$-category with only one object). A satisfactory answer to this
question was provided by Street and Power's
\emph{computads}~\cite{street1976limits,power1991n}, which were rediscovered as
Burroni's \emph{polygraphs}~\cite{burroni1993higher}, and really deserve the
name of \emph{higher-dimensional rewriting systems} at the light of the
preceding comments~\cite{mimram:trt}. Those were the subject of my introductory
invited presentation at the TERMGRAPH workshop in 2014, along with the way usual
rewriting techniques generalize to this framework.

The aim of this article is to describe a particular example of a presentation of
a monoidal category (\ie a 2-category with one 0-cell) by a 2-dimensional
rewriting system, whose morphisms are finite partial orders. This example was
discovered during the author's PhD thesis~\cite{mimram:phd} (Section~4.3.1)
while studying asynchronous game semantics. We will focus here on this example
only and introduce only the material which is necessary to handle it: we will
only define presentations for monoidal categories (see~\cite{burroni1993higher}
for the general definition of higher-dimensional rewriting system) and we will
not mention the links to game semantics
(see~\cite{mimram:phd,mellies2007asynchronous,mimram2011structure} for a more
detailed account of the relationships between game semantics, partial orders,
and presentations of categories). The proof has been split in a series of simple
enough lemmas, which is why we have left out most proofs.

\paragraph{Related works.}
The general methodology used here is strongly inspired from the one developed by
Lafont who studied many examples of 2-dimensional
presentations~\cite{lafont2003towards}. A variant of this example was studied by
Fiore and Devesas Campos~\cite{fiore2013algebra} using similar
techniques. Finally, a very interesting tool, based on distributive laws between
monads, has been introduced by Lack in order to build presentations of monoidal
categories by combining presentations of smaller (and simpler) monoidal
subcategories~\cite{lack2004composing}: a particularity of the present example
is that this technique does not apply here.


\paragraph{Plan.}
We begin by formally introducing the category~$\P$ of finite posets we will be
interested in giving a presentation in Section~\ref{sec:finpos}, detail what we
mean by a presentation of a monoidal category in Section~\ref{sec:presenting},
and give a few examples of presentations along with the one for $\P$ in
Section~\ref{sec:poalgebras}. The rest of the article is devoted to proving that
we actually have a presentation: we define a canonical factorization of
morphisms of $\P$ in Section~\ref{sec:cf}, which is used to finish the proof in
Section~\ref{sec:pres}. We conclude and hint at possible generalizations of this
result in Section~\ref{sec:concl}.

\section{A category of finite posets}
\label{sec:finpos}
A \emph{poset} is a pair $E=(\events{E},\leq_E)$ consisting of a set
$\events{E}$, whose elements are called \emph{events}, and a relation
${\leq_E}\subseteq\events{E}\times\events{E}$ which is reflexive, antisymmetric
and transitive. Two events $x,y\in\events{E}$ are \emph{independent} when
neither $x\leq_E y$ nor $y\leq_E x$.

\begin{definition}
  We write $\Pos$ for the category whose objects are posets and morphisms are
  increasing functions.
\end{definition}

\noindent
This category can be shown to be cocomplete~\cite{adamek1994locally}, and in
particular pushouts always exist. There is a full and faithful inclusion functor
$\Set\to\Pos$, such that the image of a set $X$ is the poset $(X,=)$, and we
will allow ourselves to implicitly see a set as a poset in this way: a poset in
the image of this functor is called \emph{discrete}. Given an integer~$n\in\N$,
we write $\intset{n}$ for the set $\set{0,1,\ldots,n-1}$ (or the associated
discrete poset), and $\dintset{n}$ for the totally ordered poset
$(\intset{n},\leq)$. Given a poset $E$ and a set $D\subseteq\events{E}$, we
write $E\setminus D$ for the poset such that $\events{E\setminus D}=\events
E\setminus D$ and $\leq_{E\setminus D}$ is the restriction of $\leq_E$ to this
set.

The main category of interest in this article is defined as follows:

\begin{definition}
  We write $\P$ for the category whose objects are natural numbers $n\in\N$, and
  morphisms $(s,E,t):m\to n$ are equivalence classes of triples
  consisting of
  \begin{itemize}
  \item a finite poset $E$,
  \item a monomorphism $s:[m]\to E$, called \emph{source}, whose images are
    minimal events in the poset,
  \item a monomorphism $t:[n]\to E$, called \emph{target}, whose images are
    maximal events in the poset,
  \end{itemize}
  quotiented by the equivalence relation identifying two such triples $(s,E,t)$
  and $(s',E',t')$ whenever there exists an isomorphism $f:E\to E'$ such that
  $f\circ s=s'$ and $f\circ t=t'$:
  \[
  \vxym{
    &E\ar[dd]_-f\\
    [m]\ar[ur]^s\ar[dr]_{s'}&&\ar[ul]_t\ar[dl]^{t'}[n]\\
    &E'
  }
  \]
  Given two composable morphisms $(s,E,t):m\to n$ and $(s',E',t'):n\to p$, we
  consider the poset $E+_{[n]}E'$ defined by the following pushout in $\Pos$
  \[
  \vxym{
    &E+_{[n]}E'\\
    E\ar@{.>}[ur]^-i&&\ar@{.>}[ul]_-{i'}E'\\
    &\ar[ul]^t[n]\ar[ur]_{s'}
  }
  \]
  and define their composite as
  \[
  (i\circ s,(E+_{[n]}E')\setminus[n],i'\circ t')
  \qcolon
  m\qto p
  \]
  The identity on an object $n$ is $(\id_n,[n],\id_n):n\to n$. We leave to the
  reader as an easy exercise to check that composition is well-defined and that
  the axioms of categories are satisfied (and more generally, we will leave to
  the reader the check that constructions subsequently performed on the category
  are compatible with the equivalence relation). An event $x\in\events E$ in a
  morphism $(s,E,t):m\to n$ is called \emph{external} if it is in the image of
  either $s$ or $t$, \ie $x\in s([m])\cup t([n])$, and \emph{internal}
  otherwise.
\end{definition}

\begin{remark}
  The above definition is a variant of the well-known construction of the
  bicategory of spans~\cite{benabou1967introduction} inside a category with
  pullbacks such as~$\Pos$.
\end{remark}

\begin{example}
  \label{ex:P-composition}
  The morphism $(s,E,t):2\to 3$ with $E=\set{a,b,c,d,e,f,g,h}$ with $a<c<f$,
  $a<d$ and $e<h$, $s(0)=a$, $s(1)=b$, $t(0)=f$, $t(1)=g$, $t(2)=h$ will be
  drawn by its Hasse diagram, \ie the graph of its immediate successor relation:
  \[
  \vxym{
    &\ar@{}[l]|-f\circ&\ar@{}[l]|-g\circ&\circ\ar@{}[r]|-h&\\
    &\ar@{}[l]|-c\bullet\ar@{-}[u]&\ar@{}[l]|-d\bullet&\bullet\ar@{-}[u]\ar@{}[r]|-e&\\
    &\ar@{}[l]|-a\circ\ar@{-}[u]\ar@{-}[u]\ar@{-}[ur]&&\circ\ar@{}[r]|-b&\\
  }
  \]
  The bullets represent events of the poset, the filled ones denoting internal
  events and empty ones denoting external events, and events are increasing from
  bottom to top. Notice that the names of the bullets do not really matter,
  because of the quotient used when defining morphisms, and we will not figure
  them in the following. The image of the source (\resp target) is figured by
  the empty bullets at the bottom (\resp top), where the $s(i)$ (\resp $t(i)$)
  are always represented with $i$ increasing from left to right. Composition is
  performed by ``gluing'' diagrams along the interface (by pushout) and erasing
  the external events of this interface:
  \[
  \pa{
    \vxym{
      \circ&\circ&\circ\\
      \bullet\ar@{-}[u]&\bullet&\bullet\ar@{-}[u]\\
      \circ\ar@{-}[u]\ar@{-}[u]\ar@{-}[ur]&&\circ\\
    }
  }
  \circ
  \pa{
    \vxym{
      \circ&&\circ\\
      \bullet\ar@{-}[u]&&\bullet\ar@{-}[u]\\
      \circ\ar@{-}[u]&\circ\ar@{-}[uul]&\circ\ar@{-}[u]\\
    }
  }
  \qeq
  \vxym{
    \circ&\circ&\circ\\
    \bullet\ar@{-}[u]&\bullet&\bullet\ar@{-}[u]\\
    \bullet\ar@{-}[u]\ar@{-}[ur]&&\bullet\\
    \circ\ar@{-}[u]&\circ\ar@{-}[uul]\ar@{-}[uu]&\circ\ar@{-}[u]\\
  }
  \]
\end{example}

Notice that this category is self-dual, in the sense that there is an
isomorphism $\P\cong\P^\op$, which is the identity on objects and respects the
monoidal structure introduced in next section
(Definition~\ref{def:P-monoidal}). Also, it contains the category of relations
as a subcategory:

\begin{definition}
  \label{def:rel}
  We write $\Rel$ for the category with integers as objects, such that a
  morphism $R:m\to n$ is a relation $R\subseteq\intset{m}\times\intset{n}$. The
  composite of two morphisms $R:m\to n$ and $S:n\to o$ is the relation $S\circ
  R:m\to o$ such that $(i,k)\in S\circ R$ if and only if there exists
  $j\in\intset{n}$ such that $(i,j)\in R$ and $(j,k)\in S$. The identity
  relation $\id_n:n\to n$ is such that $(i,j)\in\id_n$ if and only if $i=j$.
\end{definition}

\noindent
The following lemma will allow us to implicitly see a relation as a morphism in
the category $\P$:

\begin{lemma}
  \label{lemma:rel-poset}
  The following functor $\Rel\to\P$ is faithful. The functor is the identity on
  objects, and the image of a relation $R:m\to n$ is the morphism $(s,E,t):m\to
  n$, with $\events{E}=\intset{m}\uplus\intset{n}$, such that for any two
  elements $i\in\intset{m}$ and $j\in\intset{n}$ of $\events{E}$ we have $i\leq
  j$ if and only if $(i,j)\in R$, and the maps $s:\intset{m}\to E$ and
  $t:\intset{n}\to E$ are the injections of the coproduct.
\end{lemma}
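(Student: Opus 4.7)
The plan is to exploit the rigidity of the canonical representative to reduce faithfulness to an elementary check. Suppose $R, R' \colon m \to n$ are two relations whose images under the functor are equal as morphisms in $\P$. By the construction of the functor, the two chosen representatives $(s, E, t)$ and $(s', E', t')$ share the same underlying event set $E = E' = \intset{m} \uplus \intset{n}$, and in both cases the source and target maps are precisely the coproduct injections; in particular $s = s'$ and $t = t'$ as set-theoretic maps. Equality of the two morphisms in $\P$ therefore amounts to the existence of a poset isomorphism $f \colon E \to E'$ satisfying $f \circ s = s$ and $f \circ t = t$.

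Since $s(\intset{m}) \cup t(\intset{n})$ exhausts all of $E$, these compatibility conditions determine $f$ uniquely on the whole event set and force it to be the identity bijection on $\intset{m} \uplus \intset{n}$. The remaining content of ``$f$ is a poset isomorphism'' then collapses to saying that the two partial orders on $E$ agree. By construction, the only non-reflexive comparisons in either order are of the form $i \leq j$ with $i \in \intset{m}$ and $j \in \intset{n}$, and these hold in $E$ (resp.\ $E'$) if and only if $(i, j) \in R$ (resp.\ $(i, j) \in R'$). Hence $R = R'$, which is the claim.

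There is essentially no obstacle: the representative is set up so tightly that the source and target data pin down the event set completely, leaving only the ordering to vary, and the ordering is precisely the encoding of the relation. The only thing worth being slightly careful about is observing that no further comparisons need to be added by reflexivity or transitivity beyond those listed (the ``cross'' comparisons go only from the $\intset{m}$-copy to the $\intset{n}$-copy, so no chains of length greater than two arise and the relation defines a partial order directly).
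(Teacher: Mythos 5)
Your argument is correct and is exactly the routine verification the paper omits (it states this lemma without proof): since every event of $E$ lies in the image of $s$ or $t$, any isomorphism of triples is forced to be the identity, so equality in $\P$ forces equality of the orders and hence of the relations. The only thing you leave untouched is functoriality itself (that relational composition matches the pushout-and-erase composition of $\P$), but that is a separate check and the lemma's stated content is faithfulness, which you establish cleanly.
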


\begin{example}
  \label{ex:rel-P}
  The relation $R:4\to 3$ with $R=\set{(0,0),(0,1),(0,2),(2,0)}$ can be seen as
  the following morphism of~$\P$:
  \[
  \vxym{
    \circ&\circ&\circ\\
    \circ\ar@{-}[u]\ar@{-}[ur]\ar@{-}[urr]&\circ&\ar@{-}[ull]\circ&\circ
  }
  \]
\end{example}

\section{Presenting monoidal categories}
\label{sec:presenting}
We recall that a \emph{strict monoidal category} $(\C,\otimes,I)$ consists of a
category $\C$ together with a functor $\otimes:\C\times\C\to\C$ called
\emph{tensor product} and an object $I\in\C$ called \emph{unit} such that the
tensor is associative and admits $I$ as unit: for every objects $A$, $B$ and
$C$, $(A\otimes B)\otimes C=A\otimes(B\otimes C)$ and $I\otimes A=A=A\otimes
I$. In this article, we only consider monoidal categories which are strict. A
\emph{symmetry} in such a category consists of a natural transformation of
components $\gamma_{A,B}:A\otimes B\to B\otimes A$ such that for every objects
$A$, $B$, $C$ we have
\[
\gamma_{A\otimes B,C}=(\gamma_{A,C}\otimes\id_B)\circ(\id_A\otimes\gamma_{B,C})
\qquad\qquad
\gamma_{I,A}=\id_A
\qquad\qquad
\gamma_{B,A}\circ\gamma_{A,B}=\id_{A\otimes B}
\]
A functor between monoidal categories is \emph{monoidal} when it respects the
tensor product and the unit, and we moreover suppose that monoidal functors
between symmetric monoidal categories preserve the symmetry. We write $\MonCat$
for the category of monoidal categories and monoidal functors.

\begin{definition}
  \label{def:P-monoidal}
  The category $\P$ can be made into a monoidal category with $0$ as unit, and
  tensor being defined by addition on objects ($m\otimes n=m+n$) and by disjoint
  union on morphisms. Moreover, a symmetry can easily be defined.
\end{definition}

\noindent
A monoidal category, such as $\P$, with integers as objects and tensor product
being given on objects by addition is often called a \emph{PRO}, or a
\emph{PROP} when it is additionally equipped with a
symmetry~\cite{maclane1965categorical}.

\begin{example}
  Using the diagrammatic notations of Example~\ref{ex:P-composition}, we have
  \[
  \pa{
    \vxym{
      \circ&\\
      \bullet\ar@{-}[u]&\\
      \circ\ar@{-}[u]&\circ\ar@{-}[uul]\\
    }
  }
  \otimes
  \pa{
    \vxym{
      \circ\\
      \bullet\ar@{-}[u]\\
      \circ\ar@{-}[u]\\
    }
  }
  \qeq
  \vxym{
    \circ&&\circ\\
    \bullet\ar@{-}[u]&&\bullet\ar@{-}[u]\\
    \circ\ar@{-}[u]&\circ\ar@{-}[uul]&\circ\ar@{-}[u]\\
  }
  \]
\end{example}

In order to define presentations of monoidal categories, we first need to define
an appropriate notion of signature for them. Since those contain objects and
morphisms, a signature will consist of generators for both of them. For the sake
of brevity, we follow here the formalization specific to monoidal categories
(also sometimes called \emph{tensor schemes}~\cite{joyal1991geometry}),
see~\cite{burroni1993higher} for the general case.

\begin{definition}
  A \emph{(monoidal) signature} $\Sigma=(\Sigma_1,s_1,t_1,\Sigma_2)$ consists of
  \begin{itemize}
  \item a set $\Sigma_1$ of \emph{object generators},
  \item a set $\Sigma_2$ of \emph{morphism generators},
  \item two functions $s_1,t_1:\Sigma_2\to\Sigma_1^*$, where $\Sigma_1^*$
    denotes the free monoid over $\Sigma_1$, assigning to a morphism generator
    its \emph{source} and \emph{target} respectively.
  \end{itemize}
  The category $\MonSig$ has signatures as objects and a morphism
  $f:(\Sigma_1,s_1,t_1,\Sigma_2)\to(\Sigma'_1,s'_1,t'_1,\Sigma'_2)$ is a pair
  $f=(f_1,f_2)$ of functions $f_1:\Sigma_1\to\Sigma'_1$ and
  $f_2:\Sigma_2\to\Sigma'_2$ such that $s_1'\circ f_2=f_1^*\circ s_1$ and
  $t_1'\circ f_2=f_1^*\circ t_1$, where $f_1^*:\Sigma_1^*\to{\Sigma'_1}^*$ is
  the extension of $f_1$ as a morphism of monoids.
\end{definition}

A monoidal signature generates a free monoidal category in the following
sense. Given a category~$\C$, we write $\Mor(\C)$ for the class of morphisms
of~$\C$.

\begin{proposition}
  Consider the forgetful functor $U:\MonCat\to\MonSig$ sending a monoidal
  category $(\C,\otimes,I)$ to the signature
  $\Sigma=(\Sigma_1,s_1,t_1,\Sigma_2)$, with $\Sigma_1$ being the set of objects
  of $\C$, $\Sigma_2$ consisting of triples $(A_1\ldots A_p,f,B_1\ldots
  B_q)\in\Sigma_1^*\times\Mor(\C)\times\Sigma_1^*$ such that
  $f\in\C(A_1\otimes\ldots\otimes A_n,B_1\otimes\ldots\otimes B_p)$, and
  functions $s_1$ and $t_1$ are given by first and third projection
  respectively. This functor admits a left adjoint $F:\MonSig\to\MonCat$. Given
  a signature $\Sigma$, the monoidal category $F\Sigma$ is often denoted
  $\Sigma^*$ and called the \emph{free monoidal category} on the signature.
  More explicitly, $\Sigma^*$ is the smallest category whose class of objects is
  $\Sigma_1^*$, for every morphism generator $f\in\Sigma_2$ there is a morphism
  $f\in\Sigma^*(s_1(f),t_1(f))$, for every object there is a formal identity on
  this object, morphisms are closed under formal composites and tensor products,
  and are quotiented by the axioms of monoidal categories.
\end{proposition}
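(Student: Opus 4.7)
The plan is to construct $F\Sigma = \Sigma^*$ by the explicit recipe indicated in the second paragraph of the statement and then exhibit a natural bijection $\MonCat(\Sigma^*, \C) \cong \MonSig(\Sigma, U\C)$ for every monoidal category $\C$, which establishes $F \dashv U$.

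For the construction, I would take the objects of $\Sigma^*$ to be the free monoid $\Sigma_1^*$. For the hom-classes I would first form the set of formal terms generated by: each $\phi \in \Sigma_2$, viewed as a term with source $s_1(\phi)$ and target $t_1(\phi)$; a formal identity $\id_w$ for each $w \in \Sigma_1^*$; formal composites $g \circ f$ whenever the target of $f$ agrees with the source of $g$; and formal tensor products $f \otimes g$, with source and target given by concatenation. I would then quotient by the smallest congruence (with respect to $\circ$ and $\otimes$) containing the axioms of strict monoidal categories: associativity and unit laws for $\circ$, associativity and unit laws for $\otimes$, and the functoriality of $\otimes$, namely $\id_v \otimes \id_w = \id_{vw}$ and $(g \otimes g') \circ (f \otimes f') = (g \circ f) \otimes (g' \circ f')$ whenever both sides are defined. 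Verifying that this yields a strict monoidal category is routine, and the resulting construction is visibly functorial in $\Sigma$.

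The unit $\eta_\Sigma : \Sigma \to U\Sigma^*$ would send each object generator $a \in \Sigma_1$ to the one-letter word $a$, and each morphism generator $\phi \in \Sigma_2$ to the triple $(s_1(\phi), \phi, t_1(\phi))$; naturality in $\Sigma$ is straightforward from the way $F$ acts on signature morphisms. For the universal property, given a signature morphism $f : \Sigma \to U\C$, I would define a candidate extension $\bar f : \Sigma^* \to \C$ on objects by $\bar f(a_1 \ldots a_n) = f_1(a_1) \otimes \ldots \otimes f_1(a_n)$, and on morphisms by induction on the term structure: send each generator $\phi$ to the morphism underlying $f_2(\phi)$, each $\id_w$ to the identity on $\bar f(w)$, and extend by preserving $\circ$ and $\otimes$.

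The main subtlety, and the step I expect to require the most care, is showing that $\bar f$ respects the congruence used to define morphisms of $\Sigma^*$; this reduces to observing that every defining equation of the congruence holds in $\C$ because $\C$ is itself a strict monoidal category, so the inductive assignment descends to a well-defined strict monoidal functor. Uniqueness is then immediate: any monoidal functor extending $f$ along $\eta_\Sigma$ is forced on the generating objects and morphisms and therefore determined on all of $\Sigma^*$ by its compatibility with $\otimes$, $I$, $\id$ and $\circ$. Collecting these observations gives the natural bijection $\MonCat(\Sigma^*, \C) \cong \MonSig(\Sigma, U\C)$ and hence the adjunction $F \dashv U$; the explicit description of $\Sigma^*$ stated in the proposition is precisely the construction above.
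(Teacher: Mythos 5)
Your proposal is correct and takes essentially the approach the paper intends: the explicit description of $\Sigma^*$ in the second paragraph of the proposition is exactly your terms-modulo-congruence construction, and the paper simply omits the routine verification of the universal property that you spell out (well-definedness of the extension because $\C$ satisfies the defining equations, and uniqueness forced on generators).
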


\begin{remark}
  \label{rem:free-moncat}
  Suppose fixed a signature $\Sigma=(\Sigma_1,s_1,t_1,\Sigma_2)$ and a monoidal
  category $(\C,\otimes,I)$. Suppose moreover that we are given an object $FA$
  for every object generator $A\in\Sigma_1$, and a morphism
  $F\alpha:FA_1\otimes\ldots\otimes FA_p\to FB_1\otimes\ldots\otimes FB_q$ for
  every morphism generator $\alpha\in\Sigma_2$ with $s_1(\alpha)=A_1\ldots A_p$
  and $t_1(\alpha)=B_1\ldots B_q$. Then, by the freeness property of $\Sigma^*$,
  the operation $F$ extends uniquely as a monoidal functor $F:\Sigma^*\to\C$.
\end{remark}


\begin{remark}
  \label{rem:1-object}
  When the set $\Sigma_1=\set{1}$ is reduced to one element, $\Sigma_1^*$ is
  isomorphic to $\N$, and we denote by integers its elements. In this case, the
  monoidal category $\Sigma^*$ is a PRO.
\end{remark}

\noindent
The following lemma is often useful in order to prove properties of morphisms of
$\Sigma^*$ by induction over the number of generators they consist in:

\begin{lemma}
  \label{lemma:free-moncat-slice}
  Every morphism $f$ of $\Sigma^*$ can be written as a finite composite
  $f=f_k\circ\ldots\circ f_1$ where each morphism~$f_i$ is of the form
  $f_i=\id_{A_i}\otimes\alpha_i\otimes\id_{B_i}$ for some objects $A_i$ and
  $B_i$ and morphism generator~$\alpha_i\in\Sigma_2$.
\end{lemma}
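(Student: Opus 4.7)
The plan is to proceed by induction on the construction of a morphism $f \in \Sigma^*$ from the generating data. Since $\Sigma^*$ is obtained by closing under formal composites and tensor products, then quotienting by the axioms of monoidal categories, it suffices to exhibit a decomposition of the required shape for each constructor and check that it is preserved (up to the decomposition existing, not being unique) by the monoidal axioms.

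For the base cases, an identity $\id_A$ admits the empty composite (with $k=0$), and a generator $\alpha \in \Sigma_2$ is already of the required shape by taking $A_1 = B_1 = I$, using the unit axiom $\id_I \otimes \alpha \otimes \id_I = \alpha$. For composition, if $f = g \circ h$ and both $g$ and $h$ have decompositions into whiskered generators, I concatenate the two sequences. The interesting case is the tensor product: given $g \colon A \to B$ and $h \colon C \to D$, I use the interchange law, which in a strict monoidal category gives
\[
g \otimes h \qeq (g \otimes \id_D) \circ (\id_A \otimes h).
\]
So it suffices to treat left and right whiskering. If $h = h_n \circ \ldots \circ h_1$ with $h_i = \id_{C_i} \otimes \alpha_i \otimes \id_{D_i}$, then functoriality of $\otimes$ and strict associativity yield
\[
\id_A \otimes h \qeq (\id_A \otimes h_n) \circ \ldots \circ (\id_A \otimes h_1),
\qquad
\id_A \otimes h_i \qeq \id_{A \otimes C_i} \otimes \alpha_i \otimes \id_{D_i},
\]
which is still in the required form. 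Right whiskering by $\id_D$ is handled symmetrically, producing slices of the form $\id_{A_i} \otimes \alpha_i \otimes \id_{B_i \otimes D}$.

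The verification that this is compatible with the equational quotient is routine: associativity of composition and of $\otimes$ only permute or re-parenthesize the slices; the identity axioms allow inserting or deleting empty composites and trivial whiskerings; and the interchange law used above is itself one of the monoidal axioms. The only step that requires any care, and which I would flag as the main (mild) obstacle, is the tensor case, because one must resist the temptation to keep $g \otimes h$ as a single slice and instead split it via interchange into two sequential whiskerings before invoking the induction hypothesis; once this move is made, strict associativity does all the remaining bookkeeping automatically.
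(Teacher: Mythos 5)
Your argument is correct and is the standard one: the paper deliberately omits the proof of this lemma (it is among the ``easy induction'' proofs left to the reader), and the intended argument is exactly what you give --- structural induction on a representative formal expression, with identities as empty composites, concatenation for composition, and the interchange law $g\otimes h=(g\otimes\id)\circ(\id\otimes h)$ together with strict associativity to reduce tensor products to whiskered slices. Your only superfluous step is the worry about compatibility with the equational quotient: since the lemma asserts mere existence of such a decomposition, it suffices to induct on any one representative of the equivalence class.
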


The notion of rewriting system, adapted to monoidal categories, can finally be
defined as follows. We say that two morphisms in a category are \emph{parallel}
when they have the same source, and the same target.

\begin{definition}
  A \emph{(monoidal) rewriting system} is a pair $(\Sigma,R)$ consisting of a
  signature $\Sigma$ and a set $R\subseteq\Mor(\Sigma^*)\times\Mor(\Sigma^*)$ of
  pairs of morphisms which are parallel. The monoidal category \emph{presented}
  by such a rewriting system is the monoidal category $\Sigma^*/{\equiv_R}$
  obtained by quotienting the morphisms of~$\Sigma^*$ by the smallest
  congruence~$\equiv_R$, \wrt composition and tensor product, containing~$R$.
\end{definition}

\begin{remark}
  \label{rem:free-moncat-rel}
  Since $\equiv_R$ is the congruence generated by~$R$, a functor
  $F:\Sigma^*\to\C$ such that $F(f)=F(g)$ for every $(f,g)\in R$ induces a
  quotient functor $(\Sigma^*/{\equiv_R})\to\C$.
\end{remark}

\noindent
We often write $\alpha\To\beta$ for a rule $(\alpha,\beta)\in R$. We recall
below the most well-known example of presentation of a monoidal category: the
simplicial category. It is fundamental since it is at the heart of simplicial
algebraic topology.

\begin{example}
  \label{ex:pres-moncat}
  Consider the signature $\Sigma=(\Sigma_1,s_1,t_1,\Sigma_2)$ with
  $\Sigma_1=\set{1}$, $\Sigma_2=\set{\eta,\mu}$, $s_1(\eta)=\varepsilon$
  (where~$\varepsilon$ denotes the empty word), $s_1(\mu)=11$, and
  $t_1(\eta)=t_1(\mu)=1$. As explained in Remark~\ref{rem:1-object}, since the
  set~$\Sigma_1$ contains only one generator, the monoid $\Sigma_1^*$ is
  isomorphic to $\N$. The morphism generators can thus be denoted
  \[
  \eta:0\to 1
  \qquad\text{and}\qquad
  \mu:2\to 1
  \]
  along with their source and target. We will use this notation for defining
  morphism generators in the following. We consider the set of rules defined by
  \begin{equation}
    \label{eq:monoid-rules}
    R\qeq\set{
      \quad
      \mu\circ(\eta\otimes\id_1)\To\id_1
      \qcomma
      \mu\circ(\id_1\otimes\eta)\To\id_1
      \qcomma
      \mu\circ(\mu\otimes\id_1)\To\mu\circ(\id_1\otimes\mu)
      \quad
    }
  \end{equation}
  If we use the string diagrammatic notation for morphisms, and draw $\eta$ and
  $\mu$ respectively as on the left below, the rules can be figured as on the
  right:
  \[
  \strid{mult}
  \qquad
  \strid{unit}
  \qquad\qquad
  \strid{lab_unit_l}\To\strid{lab_unit_c}
  \qquad
  \strid{lab_unit_r}\To\strid{lab_unit_c}
  \qquad
  \strid{lab_assoc_l}\To\strid{lab_assoc_r}
  \]
  (notice that diagrams have to be read from bottom up, in order to be
  consistent with the notation for posets of Section~\ref{sec:finpos}). The
  rewriting system can be shown to be terminating and the five critical pairs
  can be joined, it is thus convergent. Normal forms are tensor products of
  ``right combs'', \eg
  \begin{equation}
    \label{eq:ex-rcomb}
    \strid{rcomb}
  \end{equation}
  Now, consider the \emph{simplicial category}~$\Delta$, with integers as
  objects, and such that a morphism $f:m\to n$ is an increasing function
  $f:[m]\to[n]$. It is monoidal when equipped with the expected tensor product
  given on objects by addition, \ie it is a PRO. With the above graphical
  representation in mind, it is easy to see that normal forms for the rewriting
  system are in bijection with morphisms of~$\Delta$, and that this bijection is
  compatible with composition and tensor product. For instance, the normal
  form~\eqref{eq:ex-rcomb} corresponds to the function $f:[9]\to[3]$ such that
  $f(i)=0$ if $0\leq i\leq 4$, and $f(i)=2$ if $5\leq i\leq 8$,
  see~\cite{lafont2003towards, mimram:trt} for details.
\end{example}

The rewriting system described in previous example corresponds to the theory of
monoids: from the above presentation, we immediately deduce that, given a
monoidal category~$\C$, monoidal functors $\Delta\to\C$ are in bijection with
monoids in~$\C$, see Definition~\ref{def:monoid} and~\cite{maclane:cwm} (and
moreover monoidal natural transformations correspond to morphisms of
monoids). We will thus call the rewriting system the \emph{theory} of monoids
(the ``the'' is slightly abusive since there are multiple choices for
orientations of rules and even of generators, but those will give the same
presented category).

The proof that the theory of monoids is a presentation of the simplicial
category, that we sketched in Example~\ref{ex:pres-moncat}, is a direct
adaptation of the classical rewriting argument in the case of monoids presented
in Example~\ref{ex:pres-monoid}. There are some theories for which no
orientation of the rules gives rise to a convergent rewriting system. However,
what matters here is only the fact that we have canonical representatives for
equivalence classes of morphisms in $\Sigma^*$ modulo the congruence generated
by the rules: we can hope to define them ``by hand'' (in which case we call them
\emph{canonical forms}), instead of obtaining them as normal forms for a
rewriting system. In order to show that a rewriting system~$(\Sigma,R)$ is a
presentation for a monoidal category~$\C$, one can thus apply the following
recipe discovered by Lafont~\cite{lafont2003towards}:
\begin{enumerate}
\item define a functor $\Sigma^*\to\C$ which is bijective on objects by
  interpreting generators of $\Sigma$ into $\C$, see
  Remark~\ref{rem:free-moncat},
\item show that it induces a functor $F:(\Sigma^*/{\equiv_R})\to\C$ by verifying
  that every pair of morphisms in $R$ have the same image, see
  Remark~\ref{rem:free-moncat-rel},
\item define (\eg inductively) a subset of morphisms in $\Sigma^*$, whose
  elements are called canonical forms,
\item show that every morphism in $\Sigma^*$ is equivalent by $\equiv_R$ to a
  canonical form by induction, see Lemma~\ref{lemma:free-moncat-slice},
\item show that the functor~$F$ restricted to canonical forms is a bijection.
\end{enumerate}
The first two steps allow us to define a functor $F:(\Sigma^*/{\equiv_R})\to\C$,
interpreting the presented monoidal category into~$\C$ and step 3 and 4 amount
to choose at least one representative in each equivalence class of morphisms
modulo $\equiv_R$. Finally, step 5 allows us to conclude. Namely, the
functor~$F$ is full since every morphism of $\C$ is the image of the equivalence
class of the corresponding canonical form. The functor is also faithful: given
two morphisms $f$ and $g$ such that $Ff=Fg$, by step~4 there is a canonical
form~$\ol f$ associated to $f$ and a canonical form $\ol g$ associated to~$g$
and those are necessarily equal by step~5, and therefore $f=g$. Notice that we
can conclude \emph{a posteriori} that each equivalence class contains exactly
one canonical form.
In the following, we use a variant of this methodology in order to build a
presentation for the monoidal category~$\P$. We will define the presentation and
interpret it in Section~\ref{sec:pres}.

\section{The theory of poalgebras}
\label{sec:poalgebras}
In this section, we first introduce some classical algebraic structures and
provide the monoidal category presented by the associated theory (\ie rewriting
system), and then define the algebraic structure corresponding to the
category~$\P$, which is a variant of those. We suppose fixed a monoidal
category~$(\C,\otimes,I)$ equipped with a symmetry~$\gamma$.

\begin{definition}
  \label{def:monoid}
  A \emph{monoid} $(M,\eta,\mu)$ in $\C$ consists of an object $M$ together with
  two morphisms $\eta:I\to M$ and $\mu:M\otimes M\to M$ such that
  $\mu\circ(\mu\otimes\id_M)=\mu\circ(\id_M\otimes\mu)$ and
  $\mu\circ(\eta\otimes\id_M)=\id_M=\mu\circ(\id_M\otimes\eta)$. Such a monoid
  is \emph{commutative} when $\mu\circ\gamma_{M,M}=\mu$. A \emph{comonoid}
  $(M,\varepsilon,\delta)$ is a monoid in $\C^\op$.
\end{definition}

\begin{proposition}[\cite{maclane:cwm, lafont2003towards}]
  \label{prop:free-monoid}
  The theory for monoids presents the simplicial category~$\Delta$, see
  Example~\ref{ex:pres-moncat}. The theory for commutative monoids presents the
  PROP~$\category{F}$ such that a morphism $f:m\to n$ is a function
  $f:[m]\to[n]$.
\end{proposition}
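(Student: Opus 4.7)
The plan is to apply the five-step recipe of Lafont summarized after Example~\ref{ex:pres-moncat}, treating the monoid case first and then indicating the adaptation to commutative monoids. For the interpretation, define $F:\Sigma^*\to\Delta$ by sending the unique object generator to $1\in\Delta$, $\eta$ to the unique increasing function $[0]\to[1]$, and $\mu$ to the unique increasing function $[2]\to[1]$, and extend uniquely as a monoidal functor via Remark~\ref{rem:free-moncat}. The three rules of~\eqref{eq:monoid-rules} are automatically validated in $\Delta$ because each side of each equation has codomain $[1]$, for which there is a unique increasing function from any given source.

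For canonical forms, I would define inductively the right comb $\mu_k:k\to 1$ in $\Sigma^*$ by $\mu_0=\eta$, $\mu_1=\id_1$, and $\mu_{k+1}=\mu\circ(\mu_k\otimes\id_1)$ for $k\geq 1$, and declare canonical forms to be morphisms of the shape $\mu_{k_0}\otimes\mu_{k_1}\otimes\ldots\otimes\mu_{k_{n-1}}$ indexed by tuples $(k_0,\ldots,k_{n-1})\in\N^n$. By Lemma~\ref{lemma:free-moncat-slice}, every morphism of $\Sigma^*$ is a composite $f_j\circ\ldots\circ f_1$ with $f_i=\id_{A_i}\otimes\alpha_i\otimes\id_{B_i}$, so it suffices to show by induction on $j$ that, whenever $c$ is a canonical form, both $(\id_A\otimes\eta\otimes\id_B)\circ c$ and $(\id_A\otimes\mu\otimes\id_B)\circ c$ are $\equiv_R$-equivalent to canonical forms. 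The $\eta$-case either inserts an extra $\mu_0=\eta$ between two existing factors or increments some $k_i$ by one using a unit law; the $\mu$-case either merges two adjacent combs via associativity into $\mu_{k_i+k_{i+1}}$ or reduces via a unit law when an adjacent column is $\mu_1=\id_1$. To close the argument, observe that $F$ sends the canonical form indexed by $(k_0,\ldots,k_{n-1})$ to the increasing function $[k_0+\ldots+k_{n-1}]\to[n]$ whose fibre over $i$ has cardinality $k_i$; since an increasing function is uniquely determined by its fibre-size tuple, this is a bijection between canonical forms and morphisms of $\Delta$.

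For the commutative case targeting $\category{F}$, the presentation is enlarged with the symmetry $\gamma_{1,1}$ and the commutativity rule $\mu\circ\gamma_{1,1}\To\mu$. I would take canonical forms of the shape $(\mu_{k_0}\otimes\ldots\otimes\mu_{k_{n-1}})\circ\sigma$, where $\sigma$ is the lift to $\Sigma^*$ of the unique permutation of $[m]$ that stably sorts the elements of $[m]$ by their target fibre. A preliminary derivation, iterating the commutativity rule together with naturality and the PROP axioms for $\gamma$, gives $\mu_k\circ\tau\equiv_R\mu_k$ for every permutation $\tau$ of $k$; this is what allows an arbitrary composite of symmetries and multiplications to collapse onto a fixed canonical form. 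The normalization step then proceeds exactly as in the monoid case, with the additional subroutine of rewriting compositions of symmetries into the chosen sorted permutation. The bijection with $\category{F}(m,n)$ pairs the fibre-size tuple $(k_i)$ with the stable sorting permutation $\sigma$, which together encode an arbitrary function $[m]\to[n]$.

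The main obstacle is step~4, and particularly its commutative version: the associativity and unit rules handle the monoid reduction cleanly, but in the commutative setting one must combine the PROP axioms for $\gamma$ with the commutativity rule to ensure every arrangement of symmetries and multiplications reduces to the unique sorted canonical form, which is really a small rewriting argument internal to the normalization procedure.
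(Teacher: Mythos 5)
Your proposal is correct in substance and is the standard argument, but it takes a somewhat different route from the one the paper actually indicates for this statement. The paper does not prove Proposition~\ref{prop:free-monoid}: it cites it, and the only argument it supplies is the sketch in Example~\ref{ex:pres-moncat}, which establishes the $\Delta$ half by showing that the oriented rewriting system is \emph{convergent} (termination plus joinability of the five critical pairs) and then identifying normal forms with increasing functions. You instead run Lafont's five-step canonical-form recipe, which the paper presents as the general methodology but reserves for situations where convergence is unavailable. The two routes buy different things: convergence gives unique normal forms and a decision procedure for free, whereas your induction via Lemma~\ref{lemma:free-moncat-slice} only needs ``every morphism is $\equiv_R$-equivalent to a canonical form'' plus injectivity of the interpretation on canonical forms, and so it extends uniformly to the commutative case, for which the paper exhibits no convergent orientation at all. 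Your identification of the comb tuple $(k_0,\ldots,k_{n-1})$ with the increasing function whose fibre over $i$ has cardinality $k_i$ is exactly the bijection the paper has in mind.

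A few points to tighten. Your $\mu_{k+1}=\mu\circ(\mu_k\otimes\id_1)$ builds \emph{left} combs, which are not the normal forms of the orientation~\eqref{eq:monoid-rules}; this is harmless for the canonical-form recipe (any choice of representatives works) but they should not be called right combs. In the merging step, the unit laws are needed when an adjacent factor is $\mu_0=\eta$, not $\mu_1=\id_1$ (the latter case is handled by associativity alone), and under postcomposition the $\eta$ case only inserts a new empty fibre, it never increments a $k_i$. Most importantly, the commutative half is where the real content lies, and you have only named its two key ingredients: the derived identity $\mu_k\circ\tau\equiv_R\mu_k$ for every permutation $\tau$, and the fact that any two words in $\gamma$ representing the same permutation are identified by the involution and Yang--Baxter relations, so that ``the'' stable sorting permutation is well defined in $\Sigma^*/{\equiv_R}$. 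Both are true and standard, but as written they are an acknowledged subroutine rather than a proof; spelling out the normalization that pushes all symmetries below the combs using the naturality rules~\eqref{eq:gamma-nat-rules} is the part you still owe.
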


\begin{definition}
  \label{def:bialg}
  A \emph{bialgebra} $(B,\eta,\mu,\varepsilon,\delta)$ consists of a monoid
  $(B,\eta,\mu)$ and a comonoid $(B,\varepsilon,\delta)$ such that
  $\delta\circ\mu=(\mu\otimes\mu)\circ(\id_B\otimes\gamma_{B,B}\otimes\id_B)\circ(\delta\otimes\delta)$
  and $\varepsilon\circ\eta=\id_I$. Such a bialgebra is \emph{bicommutative}
  when both the monoid and the comonoid structure are commutative, and
  \emph{qualitative} when moreover $\mu\circ\delta=\id_B$.
\end{definition}

\begin{proposition}[\cite{maclane1965categorical, hyland2000symmetric, pirashvili2001prop, lafont2003towards, lack2004composing, mimram2011structure}]
  \label{prop:free-bialgebra}
  The theory for bicommutative bialgebras presents the PROP~$\category{Mat}_\N$
  such that a morphism $f:m\to n$ is an $(m\times n)$-matrix with coefficients
  in $\N$ together with usual composition. The theory for qualitative
  bicommutative bialgebras presents the PROP~$\Rel$, see
  Definition~\ref{def:rel}.
\end{proposition}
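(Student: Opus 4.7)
The plan is to apply the five-step recipe presented at the end of Section~\ref{sec:presenting} in each case, the qualitative one being a direct refinement of the first. Let $\Sigma$ denote the signature with one object generator and morphism generators $\eta\colon 0\to 1$, $\mu\colon 2\to 1$, $\varepsilon\colon 1\to 0$, $\delta\colon 1\to 2$. Using Remark~\ref{rem:free-moncat}, I extend the assignments $F(\mu)=\begin{pmatrix}1\\1\end{pmatrix}$, $F(\delta)=\begin{pmatrix}1&1\end{pmatrix}$, and $F(\eta)$, $F(\varepsilon)$ the (unique) empty matrices, to a monoidal functor $F\colon\Sigma^*\to\category{Mat}_\N$ (symmetries being sent to permutation matrices). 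A routine matrix computation, with composition read as matrix product, verifies all the bicommutative bialgebra axioms; for instance the bialgebra law reduces to the identity $\begin{pmatrix}1\\1\end{pmatrix}\begin{pmatrix}1&1\end{pmatrix}=\begin{pmatrix}1&1\\1&1\end{pmatrix}$. By Remark~\ref{rem:free-moncat-rel}, $F$ descends to a functor $\bar F\colon\Sigma^*/{\equiv_R}\to\category{Mat}_\N$.

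As canonical form for a matrix $M\in\N^{m\times n}$ I take the three-layer diagram: at the bottom, for each input $i\in\intset{m}$, a right-combed $\delta$-tree (collapsed to an $\varepsilon$ when $\sum_j M_{ij}=0$, or to a wire when the sum is $1$) producing $\sum_j M_{ij}$ strands; in the middle, the unique permutation reordering all $\sum_{i,j}M_{ij}$ strands so that those aimed at output $j$ precede those aimed at output $j'$ whenever $j<j'$; at the top, for each output $j\in\intset{n}$, a right-combed $\mu$-tree (replaced by $\eta$ when $\sum_i M_{ij}=0$) merging its $\sum_i M_{ij}$ incoming strands. To show that every morphism of $\Sigma^*$ is $\equiv_R$-equivalent to a canonical form I use Lemma~\ref{lemma:free-moncat-slice} and induction on the slice length, tracking the effect of left-composing each kind of generator onto a canonical form: prepending $\delta$, $\eta$ or a transposition is straightforward bookkeeping on one row of $M$ (increment, insert a zero row, swap two rows). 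The main obstacle is left-composing $\mu$ or $\varepsilon$: these must be pushed through the bottom $\delta$-layer, and it is precisely the bialgebra law together with the unit/counit axioms that converts such a clash into, respectively, the sum of two rows of $M$ or the deletion of one row. Step~5 is then immediate, since $\bar F$ sends the canonical form for $M$ to $M$, so that $\bar F$ restricts to a bijection between canonical forms and matrices.

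For the qualitative case, the extra axiom $\mu\circ\delta=\id_1$ is satisfied in $\Rel$ (both sides describe the identity relation on $\set{0}$) but fails in $\category{Mat}_\N$, so the target is now $\Rel$. Canonical forms are reindexed by $\set{0,1}$-valued matrices, \ie characteristic functions of relations; the collapse of any double strand that would arise in Step~4 is performed by a fresh application of the new rule, and every other step of the argument carries over verbatim with $\Rel$ in place of $\category{Mat}_\N$.
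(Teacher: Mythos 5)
The paper does not actually prove this proposition---it quotes it from the literature---so the relevant comparison is with the cited proofs (notably Lafont's), which follow exactly the five-step recipe you use, with the same three-layer canonical form ($\delta$-combs, a shuffle, $\mu$-combs). Your architecture is therefore the standard one, and steps 1, 2 and 5 are fine. The problem is in step 4, whose bookkeeping is systematically dualized. Indexing the rows of the matrix of $f:m\to n$ by inputs, \emph{pre}-composing $\id_i\otimes\alpha\otimes\id_j$ onto a canonical form acts as follows: $\varepsilon$ \emph{inserts} a zero row (trivially: it caps a fresh input wire and never touches the $\delta$-layer); $\eta$ \emph{deletes} row $i$; $\delta$ \emph{sums} rows $i$ and $i+1$ (coassociativity merges the two combs); and $\mu$ \emph{duplicates} row $i$ (the bialgebra law pushes it through the comb $\delta^{(k)}$, after which the resulting $\mu$'s must migrate up to the top layer). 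Each of your four assignments (``$\delta$ increments, $\eta$ inserts a zero row, $\mu$ sums two rows, $\varepsilon$ deletes a row'') is in fact the effect of the dual generator; in particular the genuinely hard pair is $\{\mu,\eta\}$ if you pre-compose (or $\{\delta,\varepsilon\}$ if you post-compose), not $\{\mu,\varepsilon\}$. As written, the induction does not close.

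A second, more substantive point: resolving $\eta$ against a $\delta$-comb requires the compatibility axiom $\delta\circ\eta=\eta\otimes\eta$ (dually, $\varepsilon\circ\mu=\varepsilon\otimes\varepsilon$ for $\varepsilon$ against a $\mu$-comb). These are bialgebra compatibility axioms in their own right---part of the definition used in the cited sources, though not among the two listed in Definition~\ref{def:bialg}---and they are not the monoid/comonoid unit and counit laws you invoke; your induction genuinely needs them, so they must be named. Two smaller repairs: the ``unique permutation'' in your canonical form is not unique until you also fix the order of strands within each output block (take the unique order-preserving shuffle, say ordered by source input); and in the qualitative case the collapse of a multiplicity-$2$ entry needs a short argument bringing the two parallel strands adjacent (cocommutativity, commutativity and naturality of $\gamma$) before $\mu\circ\delta=\id_1$ can be applied literally.
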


\begin{example}
  Consider the rewriting system $(\Sigma,R)$ corresponding to the theory for
  qualitative bicommutative bialgebras. The relation of Example~\ref{ex:rel-P}
  corresponds to the following morphism of $\Sigma^*/{\equiv_R}$:
  \[
  (\mu\otimes\id_2)
  \circ
  (\id_1\otimes\gamma\otimes\id_1)
  \circ
  (\delta\otimes\gamma)
  \circ
  (\delta\otimes\varepsilon\otimes\id_1\otimes\varepsilon)
  \]
  which can be represented graphically as
  $
  \strid{rel-bialg}
  $.
\end{example}

In the above propositions, we have been considering presentations of PROPs and
not PROs as earlier. In order to present them we could have used a variant of
rewriting systems in order to freely generate symmetries. However, thanks to the
following proposition~\cite{burroni1993higher}, there is a way to explicitly
incorporate an explicit symmetry into the presentations without changing the
notion of presentation. This is the one we have been implicitly using.

\begin{proposition}
  Suppose that a PRO $\C$ is presented by a rewriting system~$(\Sigma,R)$. Then
  the rewriting system $(\Sigma',R')$ presents the free PROP on the PRO~$\C$
  where $\Sigma'$ is obtained by adding a new morphism generator $\gamma:2\to 2$
  to $\Sigma$, and $R'$ is obtained from $R$ by adding the rule
  $\gamma\circ\gamma\To\id_2$ together with, for every morphism generator
  $\alpha:m\to n$, two rules
  \begin{equation}
    \label{eq:gamma-nat-rules}
    \gamma_{n,1}\circ(\alpha\otimes\id_1)\To(\id_1\otimes\alpha)\circ\gamma_{m,1}
    \qquad\qquad
    (\alpha\otimes\id_1)\circ\gamma_{1,m}\To\gamma_{1,n}\circ(\id_1\otimes\alpha)
  \end{equation}
  where the morphism $\gamma_{m,n}:m+n\to n+m$ is defined by induction on
  $(m,n)\in\N\times\N$ by
  \[
  \gamma_{1,0}=\id_1
  \qquad
  \gamma_{1,n+1}=(\id_1\otimes\gamma_{1,n})\circ(\gamma\otimes\id_n)
  \qquad
  \gamma_{0,n}=\id_n
  \qquad
  \gamma_{m+1,n}=(\gamma_{m,n}\otimes\id_n)\circ(\id_m\otimes\gamma_{1,n})
  \]
  Graphically, a representation of the rules~\eqref{eq:gamma-nat-rules} is
  \[
  \strid{gn1l}\To\strid{gn1r}
  \qquad\qquad
  \strid{gn2l}\To\strid{gn2r}
  \]
\end{proposition}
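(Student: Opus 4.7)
The plan is to follow the five-step recipe outlined after Example~\ref{ex:pres-moncat}, comparing the monoidal category $(\Sigma')^*/{\equiv_{R'}}$ with the free PROP on~$\C$, which I denote $F\C$ and which is characterized by the universal property that symmetric monoidal functors $F\C\to\mathcal{D}$ correspond to PRO morphisms $\C\to\mathcal{D}$ (forgetting the symmetry of $\mathcal{D}$).

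For the forward functor, I interpret each object generator of $\Sigma$ as itself in $F\C$, each morphism generator $\alpha\in\Sigma_2$ as its image under the canonical embedding $\C\hookrightarrow F\C$, and $\gamma$ as the symmetry $\gamma^{F\C}_{1,1}$. By Remark~\ref{rem:free-moncat} this extends uniquely to a monoidal functor $(\Sigma')^*\to F\C$. The rules of $R$ are preserved because the embedding $\C\hookrightarrow F\C$ is a PRO morphism; the rule $\gamma\circ\gamma\To\id_2$ holds because any symmetry is involutive; and the rules~\eqref{eq:gamma-nat-rules} hold by naturality of the symmetry of $F\C$. Applying Remark~\ref{rem:free-moncat-rel} yields a monoidal functor $F:(\Sigma')^*/{\equiv_{R'}}\to F\C$.

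For the inverse direction, I apply the universal property of $F\C$ to the PRO morphism $\C=\Sigma^*/{\equiv_R}\to(\Sigma')^*/{\equiv_{R'}}$ induced by the inclusion $\Sigma\subseteq\Sigma'$. This yields a symmetric monoidal functor $G:F\C\to(\Sigma')^*/{\equiv_{R'}}$---provided the target can itself be equipped with a symmetry, which is the main obstacle. I claim the family $\gamma_{m,n}$ defined inductively in the statement does the job. Checking this amounts to three points modulo $\equiv_{R'}$: (i) the involutive identity $\gamma_{n,m}\circ\gamma_{m,n}=\id_{m+n}$; (ii) the hexagon axiom $\gamma_{m+n,p}=(\gamma_{m,p}\otimes\id_n)\circ(\id_m\otimes\gamma_{n,p})$ together with its unit cases; and (iii) naturality of $\gamma_{-,-}$ in both arguments. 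Points~(i) and~(ii) follow by double induction on $(m,n)$ from the inductive definition, using only the rule $\gamma\circ\gamma\To\id_2$ and the interchange law.

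Point~(iii) is the technical heart of the argument: one must show $\gamma_{n,p}\circ(f\otimes\id_p)=(\id_p\otimes f)\circ\gamma_{m,p}$ for every morphism $f:m\to n$ of $(\Sigma')^*/{\equiv_{R'}}$, whereas rules~\eqref{eq:gamma-nat-rules} only provide the cases $f=\alpha\otimes\id_1$ and $f=\id_1\otimes\alpha$ for a single generator $\alpha\in\Sigma_2$. By Lemma~\ref{lemma:free-moncat-slice} it suffices to treat the case of a slice $f=\id_k\otimes\alpha\otimes\id_\ell$; nested inductions on $k$, on $\ell$, and on $p$ (using the inductive definition of $\gamma_{m,n}$ to peel off one strand at a time) then reduce this to the base cases provided, with the special case $\alpha=\gamma$ handled separately via~(i). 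Once the symmetry is in place, $G$ is well-defined. The composites $G\circ F$ and $F\circ G$ agree with the identity on generators of $\Sigma'$ (respectively on the image of $\C$ together with $\gamma^{F\C}_{1,1}$), so by Remark~\ref{rem:free-moncat} and the uniqueness clause of the universal property of $F\C$ they must coincide with the identity, completing the proof.
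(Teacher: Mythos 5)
The paper does not actually prove this proposition---it is imported from Burroni~\cite{burroni1993higher}---so there is no in-paper argument to compare against; I will assess your proposal on its own terms. Your overall strategy is the standard and correct one: build the functor $(\Sigma')^*/{\equiv_{R'}}\to F\C$ by interpreting $\gamma$ as the symmetry, build the inverse from the universal property of the free PROP once $(\Sigma')^*/{\equiv_{R'}}$ has been equipped with a symmetry $\gamma_{m,n}$, and observe that both composites are determined on generators. Your identification of naturality of $\gamma_{m,n}$ against arbitrary morphisms as the technical heart, and its reduction via Lemma~\ref{lemma:free-moncat-slice} to whiskered generators, is also right.

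There is, however, one genuine gap: your treatment of the Yang--Baxter relation. You read the rules~\eqref{eq:gamma-nat-rules} as instantiated only at generators $\alpha\in\Sigma_2$ of the original signature, and assert that ``the special case $\alpha=\gamma$ is handled separately via~(i)'', \ie via involutivity. This is false: the instance $\alpha=\gamma$ of~\eqref{eq:gamma-nat-rules} is exactly the relation $(\gamma\otimes\id_1)\circ(\id_1\otimes\gamma)\circ(\gamma\otimes\id_1)=(\id_1\otimes\gamma)\circ(\gamma\otimes\id_1)\circ(\id_1\otimes\gamma)$, and it cannot be derived from $\gamma\circ\gamma\To\id_2$ together with the interchange law and the naturality rules for the $\Sigma_2$-generators. (Take $\Sigma_2=\emptyset$: the quotient of the free monoidal category on $\gamma$ by $\gamma\circ\gamma=\id_2$ alone has as endomorphisms of~$3$ the infinite dihedral group generated by the two adjacent transpositions, not $\mathfrak{S}_3$, so no symmetry can exist on it and the presented category cannot be the free PROP.) The intended reading---confirmed by the example immediately following the proposition, where the Yang--Baxter rule is listed explicitly among the rules presenting the PROP of monoids---is that $\alpha$ ranges over all morphism generators of~$\Sigma'$, including the newly added~$\gamma$, so that Yang--Baxter is an axiom rather than something to be derived. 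With that correction your points~(i) and~(ii) do indeed follow from $\gamma\circ\gamma=\id_2$ and interchange alone (the words for $\gamma_{n,m}\circ\gamma_{m,n}$ telescope), point~(iii) goes through by the inductions you describe with the $\alpha=\gamma$ case now supplied directly by the rule, and the remainder of your argument is sound.
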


\begin{example}
  The PROP corresponding to monoids is presented by the rewriting system
  $(\Sigma,R)$ with $\Sigma_1=\set{1}$, and $\Sigma_2=\set{\eta:0\to 1,\mu:2\to
    1,\gamma:2\to 2}$, and $R$ consisting of rules~\eqref{eq:monoid-rules}
  together with
  \begin{align*}
    \gamma\circ(\eta\otimes\id_1)&\To\id_1\otimes\eta
    &
    \gamma\circ(\mu\otimes\id_1)&\To(\id_1\otimes\mu)\circ(\gamma\otimes\id_1)\circ(\id_1\otimes\gamma)
    \\
    \eta\otimes\id_1&\To\gamma\circ(\id_1\otimes\eta)
    &
    (\mu\otimes\id_1)\circ(\id_1\otimes\gamma)\circ(\gamma\otimes\id_1)&\To\gamma\circ(\id_1\otimes\mu)
    \\
    \gamma\circ\gamma&\To\id_2
    &
    (\gamma\otimes\id_1)\circ(\id_1\otimes\gamma)\circ(\gamma\otimes\id_1)&\To(\id_1\otimes\gamma)\circ(\gamma\otimes\id_1)\circ(\id_1\otimes\gamma)
  \end{align*}
  A presentation for the PROP~$\category{F}$ corresponding to commutative
  monoids is obtained by further adding the rule $\mu\circ\gamma\To\mu$, see
  Proposition~\ref{prop:free-monoid}.
\end{example}

The aim of this article is to show that the category~$\P$, defined in
Section~\ref{sec:finpos}, admits the following theory as presentation.

\begin{definition}
  A \emph{poalgebra} $(P,\eta,\mu,\varepsilon,\delta,\sigma)$ in a symmetric
  monoidal category $\C$ consists of a qualitative bicommutative bialgebra
  $(P,\eta,\mu,\varepsilon,\delta)$ in $\Sigma$ together with a morphism
  $\sigma:P\to P$ which is satisfying the following axiom, called
  \emph{transitivity}:
  \begin{equation}
    \label{eq:transitivity}
    \mu\circ(\id_P\otimes\sigma)\circ\delta\qeq\sigma
    \qquad\qquad\qquad\qquad
    \strid{lab_trans_l}\qeq\strid{lab_trans_r}
  \end{equation}
\end{definition}

\begin{remark}
  The additional axiom seems to break the horizontal symmetry of the structure,
  but this is not the case since the the dual axiom is implied:
  \[
  \mu\circ(\sigma\otimes\id_P)\circ\delta
  =\mu\circ\gamma\circ(\sigma\otimes\id_P)\circ\delta
  =\mu\circ(\id_P\otimes\sigma)\circ\gamma\circ\delta
  =\mu\circ(\id_P\otimes\sigma)\circ\delta
  =\sigma
  \]
\end{remark}

To sum up the axioms introduced in previous section, the theory of poalgebras is
the rewriting system $(\Sigma,R)$ with $\Sigma_1=\set{1}$ as object generators,
and the morphism generators in $\Sigma_2$ are, together with their diagrammatic
notation,
\[
\begin{array}{c@{\qquad}c@{\qquad}c@{\qquad}c@{\qquad}c@{\qquad}c}
  \eta:0\to 1
  &
  \mu:2\to 1
  &
  \varepsilon:1\to 0
  &
  \delta:1\to 2
  &
  \sigma:1\to 1
  &
  \gamma:2\to 2
  \\
  \strid{eta}
  &
  \strid{mu}
  &
  \strid{epsilon}
  &
  \strid{delta}
  &
  \strid{sigma}
  &
  \strid{gamma}
\end{array}
\]
and the rules in $R$ are
\begin{gather*}
  \begin{align*}
    \mu\circ(\eta\otimes\id_1)&\To\id_1
    &
    \mu\circ(\id_1\otimes\eta)&\To\id_1
    &
    \mu\circ(\mu\otimes\id_1)&\To\mu\circ(\id_1\otimes\mu)
    &
    \mu\circ\gamma&\To\mu
    \\
    (\varepsilon\otimes\id_1)\circ\delta&\To\id_1
    &
    (\id_1\otimes\varepsilon)\circ\delta&\To\id_1
    &
    (\delta\otimes\id_1)\circ\delta&\To(\id_1\otimes\delta)\circ\delta
    &
    \gamma\circ\delta&\To\delta
  \end{align*}
  \\
  \begin{align*}
  \delta\circ\mu&\To(\mu\otimes\mu)\circ(\id_1\otimes\gamma\otimes\id_1)\circ(\delta\otimes\delta)
  &
  \varepsilon\circ\eta&\To\id_0
  &
  \mu\circ\delta&\To\id_1
  &
  \mu\circ(\id_1\otimes\sigma)\circ\delta&\To\sigma
  \end{align*}
  \\
  \begin{align*}
    \gamma\circ\gamma&\To\id_2
    &
    (\gamma\otimes\id_1)\circ(\id_1\otimes\gamma)\circ(\gamma\otimes\id_1)&\To(\id_1\otimes\gamma)\circ(\gamma\otimes\id_1)\circ(\id_1\otimes\gamma)
  \end{align*}
  \\
  \begin{align*}
    \gamma\circ(\eta\otimes\id_1)&\To\id_1\otimes\eta
    &
    \eta\otimes\id_1&\To\gamma\circ(\id_1\otimes\eta)
    &
    \varepsilon\otimes\id_1&\To(\id_1\otimes\varepsilon)\circ\gamma
    &
    (\varepsilon\otimes\id_1)\circ\gamma&\To\id_1\otimes\varepsilon
  \end{align*}
  \\
  \begin{align*}
    \gamma\circ(\mu\otimes\id_1)&\To (\id_1\otimes\mu)\circ(\gamma\otimes\id_1)\circ(\id_1\otimes\gamma)
    &
    (\mu\otimes\id_1)\circ(\id_1\otimes\gamma)\circ(\gamma\otimes\id_1)&\To\gamma\circ(\id_1\otimes\mu)
    \\
    (\delta\otimes\id_1)\circ\gamma&\To(\id_1\otimes\gamma)\circ(\gamma\otimes\id_1)\circ(\id_1\otimes\delta)
    &
    (\gamma\otimes\id_1)\circ(\id_1\otimes\gamma)\circ(\delta\otimes\id_1)&\To(\id_1\otimes\delta)\circ\gamma
  \end{align*}
  \\
  \begin{align*}
    \gamma\circ(\sigma\otimes\id_1)&\To(\id_1\otimes\sigma)\circ\gamma
    &
    (\sigma\otimes\id_1)\circ\gamma&\To\gamma\circ(\id_1\otimes\sigma)
  \end{align*}
\end{gather*}
or graphically,
\renewcommand{\ss}[1]{\strid[scale=0.7]{#1}}
{\allowdisplaybreaks
\begin{align*}
  \ss{unit_l}&\To\ss{unit_c}&
  \ss{unit_r}&\To\ss{unit_c}&
  \ss{assoc_l}&\To\ss{assoc_r}&
  \ss{com_l}&\To\ss{com_r}
  \\
  \ss{counit_l}&\To\ss{counit_c}&
  \ss{counit_r}&\To\ss{counit_c}&
  \ss{coassoc_l}&\To\ss{coassoc_r}&
  \ss{cocom_l}&\To\ss{cocom_r}
  \\
  \ss{mu_delta_l}&\To\ss{mu_delta_r}&
  \ss{eta_eps_l}&\To\ss{eta_eps_r}&
  \ss{delta_mu_l}&\To\ss{delta_mu_r}&
  \ss{trans_l}&\To\ss{trans_r}
  \\
  &&
  \ss{inv_l}&\To\ss{inv_r}&
  \ss{yb_l}&\To\ss{yb_r}
  \\
  \ss{nat_etal_l}&\To\ss{nat_etal_r}&
  \ss{nat_etar_l}&\To\ss{nat_etar_r}&
  \ss{nat_epsr_l}&\To\ss{nat_epsr_r}&
  \ss{nat_epsl_l}&\To\ss{nat_epsl_r}&
  \\
  \ss{nat_mul_l}&\To\ss{nat_mul_r}&
  \ss{nat_mur_l}&\To\ss{nat_mur_r}&
  \ss{nat_deltal_l}&\To\ss{nat_deltal_r}&
  \ss{nat_deltar_l}&\To\ss{nat_deltar_r}
  \\
  &&
  \ss{nat_sigl_l}&\To\ss{nat_sigl_r}&
  \ss{nat_sigr_l}&\To\ss{nat_sigr_r}&
\end{align*}
}
This rewriting system is not confluent and whether the rules can be oriented and
completed in order to obtain a convergent rewriting system is still an open
question, although some progress have been made in this
direction~\cite{mimram:phd} (Section~5.5).
The main result of this article is the following theorem, and the rest of this
article is devoted to its proof.

\begin{theorem}
  The category~$\P$ is the PROP presented by the theory of poalgebras.
\end{theorem}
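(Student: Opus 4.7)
The plan is to apply the Lafont recipe sketched at the end of Section~\ref{sec:presenting}. Write $(\Sigma, R)$ for the theory of poalgebras.

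First I would define the interpretation $F:\Sigma^*\to\P$ on generators: the object generator $1$ maps to $1\in\P$; $\mu, \delta, \eta, \varepsilon, \gamma$ interpret to the obvious ``V'', inverted ``V'', singleton, singleton and crossing posets; and crucially $\sigma:1\to 1$ interprets to the two-event poset with one external minimum below one external maximum. By Remark~\ref{rem:free-moncat}, this extends uniquely to a monoidal functor. Second, I would verify that each listed rule holds as an equality in $\P$: this is a finite check in which each side is computed as a pushout of posets, the only subtle case being transitivity, which reduces to the observation that duplicating a wire, inserting a single covering edge $\sigma$ on one copy, and remerging gives back $\sigma$ (the resulting poset still has a single order edge between its two external events). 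By Remark~\ref{rem:free-moncat-rel}, we obtain an induced monoidal functor $\bar F:\Sigma^*/{\equiv_R}\to\P$, bijective on objects.

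The heart of the proof, carried out in Sections~\ref{sec:cf} and~\ref{sec:pres}, is to define canonical forms and show they biject with morphisms of~$\P$. My canonical form for a morphism $(s,E,t):m\to n$ would proceed as follows: fix a linear extension $e_1,\ldots,e_k$ of the internal events of $E$, and write the morphism as an alternating composite
\[
   R_{k+1}\circ(\id\otimes\sigma)\circ R_{k}\circ\ldots\circ(\id\otimes\sigma)\circ R_{1},
\]
where each $R_i$ lies in the subcategory~$\Rel\subset\P$ generated by the bialgebra fragment (Proposition~\ref{prop:free-bialgebra}) and encodes the reachability edges from ``everything available below layer $i$'' (namely $s([m])$ together with the outputs of $e_1,\ldots,e_{i-1}$) to ``everything needed above layer $i$'' (the input to $\sigma$ at position $i$, together with wires carried through to later layers), with $R_{k+1}$ finally routing into $t([n])$. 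Canonicity of the $R_i$ components follows from the fact that the relations they represent are uniquely determined by the poset, combined with the uniqueness of the bialgebraic presentation of~$\Rel$.

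Next I would prove by induction on the number of generators, using Lemma~\ref{lemma:free-moncat-slice}, that every morphism of $\Sigma^*$ is equivalent modulo $\equiv_R$ to a canonical form. The inductive step would be split into one auxiliary lemma per kind of generator being absorbed into an existing canonical form: the naturality rules push $\gamma$ through everything; the bialgebra axioms absorb $\eta, \mu, \varepsilon, \delta$ into neighbouring $R_i$'s (using in particular $\delta\circ\mu$ and $\mu\circ\delta=\id_1$); and the transitivity rule is what allows a newly appearing $\sigma$ in the middle of two relations to be pulled into a canonical layer without creating a spurious internal event. Finally, I would check that $\bar F$ restricted to canonical forms is a bijection: surjectivity is immediate, since every finite poset with boundary admits such a layered decomposition, and injectivity uses the faithfulness of $\Rel\to\P$ (Lemma~\ref{lemma:rel-poset}) to recover each $R_i$ uniquely from the image poset together with the chosen linear extension.

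The main obstacle, I expect, will be step four: since the paper acknowledges the rewriting system is not confluent, one cannot simply invoke normal form reduction and must instead perform the reduction to canonical form by an explicit algorithm, carefully commuting generators past one another using the twenty-odd rules. The delicate point is the interaction between $\sigma$ and the bialgebra generators: one must show that any occurrence of $\sigma$ can be pulled out to its canonical layer without duplicating or losing information about the internal event it encodes, and the transitivity axiom is precisely what makes this possible. This is why the author breaks the argument into many small lemmas rather than a single global confluence argument.
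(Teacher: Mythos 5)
Your overall strategy is the same as the paper's: interpret the generators, verify the rules, and build a layered canonical factorization with one $\sigma$ per internal event separated by relations governed by the bialgebra presentation of $\Rel$ (your alternating form $R_{k+1}\circ(\id\otimes\sigma)\circ R_k\circ\cdots\circ R_1$ is, once each intermediate relation is constrained to be ``identity plus a fan-in to the new $\sigma$'', exactly the paper's $R\circ X^{m+k-1}_{I_{k-1}}\circ\cdots\circ X^m_{I_0}$, since $X^n_I=(\id_n\otimes\sigma)\circ W^n_I\circ H^n$). The reduction to canonical form by induction on generators and the use of the transitivity axiom to saturate the factorization also match Lemmas~\ref{lemma:tp-cf} and~\ref{lemma:TP-trans-fact}.

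The genuine gap is in your final step. Your canonical forms are indexed not by morphisms of $\P$ but by pairs (morphism, linearization of its internal events) --- this is precisely Proposition~\ref{prop:poset-lin-cf} --- so the interpretation functor restricted to canonical forms is \emph{not} a bijection onto $\Mor(\P)$: a poset with $k$ pairwise-independent internal events has $k!$ distinct canonical forms. Fixing a linear extension ``once and for all'' does not help, because when you reduce an arbitrary morphism of $\TP$ to canonical form by your inductive procedure, the linearization you land on is dictated by the syntax of that morphism; two morphisms $f,g$ with $\intp f=\intp g$ will in general reduce to canonical forms built on \emph{different} linearizations, and your argument gives no reason for these to be equal in $\TP$. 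What is missing is the exchange machinery: that any two linearizations of a poset are connected by transpositions of consecutive independent events (Lemma~\ref{lemma:lin-swap}), how such a transposition transforms the factorization data (Lemma~\ref{lemma:P-switch}), and --- the computational heart of the paper --- that the corresponding equality $X^{n+1}_J\circ X^n_I=G^n\circ X^{n+1}_I\circ X^n_J$ and its consequence Lemma~\ref{lemma:TP-switch} are \emph{derivable from the relations} in $\TP$, using the symmetry and naturality rules. Without this, faithfulness does not follow, and appealing to the faithfulness of $\Rel\to\P$ to ``recover each $R_i$ uniquely'' only shows injectivity of the map from canonical forms to (poset, linearization) pairs, which is not the statement you need.
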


\section{Canonical factorizations}
\label{sec:cf}
In this section, we further study the category~$\P$, and define a canonical way
of writing a morphism in~$\P$ as a composite from a simple family of
morphisms. This will be used in Section~\ref{sec:pres} in order to show that
canonical forms of the presentation are in bijection with morphisms of $\P$. We
have omitted the proofs which can generally be done by easy (but not necessarily
short) induction.

The factorization of a morphism that we are going to introduce will not actually
depend on the morphism only, but on a linearization of it, in the following
sense: given a poset $E\in\Pos$, a \emph{linearization}~$\ell$ of $E$ is a
morphism $\ell:E\to\dintset{n}$ in $\Pos$, for some $n\in\N$, which is both epi
and mono.
While this definition is nice from a categorical point of view, it is sometimes
cumbersome to work with. It is easy to see that a morphism
$\ell:E\to\dintset{n}$ is a linearization of $E$ if and only if the underlying
function $\ell:\events{E}\to\intset{n}$ is a bijection (and thus $n$ is the
cardinal of $E$). A linearization of $E$ thus amounts to an enumeration
$\events{E}=\set{x_0,\ldots,x_{n-1}}$ of the events of $E$ (writing
$\ell^{-1}(i)$ as $x_i$), in way compatible with the partial order, and it turns
out to be simpler to axiomatize the inverse function. We will thus use the
following definition in the rest of the article:

\begin{definition}
  \label{def:linearization}
  A \emph{linearization} of a poset $E$ is a bijective function
  $x:\intset{n}\to\events{E}$ such that, for every $i,j\in\intset{n}$ with
  $i<j$, we have $x(i)\not\geq x(j)$.
\end{definition}

\noindent
The following lemma can be shown by induction on the cardinal of finite
posets. It is actually still true for non-finite posets if we assume the axiom
of choice~\cite{szpilrajn1930extension}, but we will not need this here.

\begin{lemma}
  Every finite poset admits a linearization.
\end{lemma}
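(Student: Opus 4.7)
The plan is to proceed by induction on the cardinal $n$ of the finite poset $E$. The base case $n = 0$ is immediate: the empty function $\intset{0} \to \events{E}$ is vacuously a linearization.

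For the inductive step, I first establish the auxiliary fact that every nonempty finite poset has at least one minimal element. Indeed, starting from any event $y_0 \in \events{E}$, as long as the current $y_k$ is not minimal there exists $y_{k+1} <_E y_k$ with $y_{k+1} \neq y_k$; by antisymmetry and transitivity this sequence never repeats an element, and since $\events{E}$ is finite it must terminate at some minimal event $y \in \events{E}$.

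Given such a minimal $y$, the poset $E \setminus \{y\}$ has cardinal $n-1$, so by induction hypothesis it admits a linearization $x' : \intset{n-1} \to \events{E \setminus \{y\}}$. Define $x : \intset{n} \to \events{E}$ by $x(0) = y$ and $x(i+1) = x'(i)$ for $i \in \intset{n-1}$. This is manifestly a bijection; it remains to check the condition of Definition~\ref{def:linearization}, namely that $x(i) \not\geq_E x(j)$ whenever $i < j$. If $i \geq 1$, both $x(i)$ and $x(j)$ lie in $\events{E \setminus \{y\}}$ and the condition follows from the fact that $\leq_{E \setminus \{y\}}$ is the restriction of $\leq_E$ together with the induction hypothesis on $x'$. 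If $i = 0$, then $x(i) = y$; should $x(j) \leq_E y$ hold, minimality of $y$ would force $x(j) = y = x(0)$, contradicting the injectivity of $x$ since $j \geq 1$.

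There is no real obstacle here: the only subtlety is isolating the extraction of a minimal element as a separate lemma, so as not to conflate it with the linearization induction itself. Note that finiteness enters crucially in this extraction step; for infinite posets one would need a Zorn-type argument, which is precisely the route taken in~\cite{szpilrajn1930extension} and which we do not require.
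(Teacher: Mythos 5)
Your proof is correct and follows exactly the route the paper indicates (induction on the cardinal of the poset, the paper itself omitting the details): extract a minimal element, linearize the remainder, and prepend. The verification of the condition of Definition~\ref{def:linearization} in both cases ($i=0$ via minimality and injectivity, $i\geq 1$ via the restriction of the order) is complete and accurate.
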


\noindent
In order to characterize when two morphisms are linearizations of a given
poset~$E$, we define a relation~$\sim$ on linearizations of~$E$, as
follows. Given two linearizations $x,x':\intset{n}\to\events{E}$ of $E$, we have
$x\sim x'$ when there exists $i\in\intset{n-1}$ such that $x'(i)=x(i+1)$,
$x'(i+1)=x(i)$ and $x'(j)=x(j)$ for $j\neq i$ and $j\neq i+1$, \ie $x$ and $x'$
only differ by swapping two consecutive independent events; otherwise said,
writing $\tau^n_i:\intset{n}\to\intset{n}$ for the transposition exchanging $i$
and $i+1$ in the set $\intset{n}$, we have $x'=x\circ\tau^n_i$.

\begin{lemma}
  \label{lemma:lin-swap}
  Any two linearizations of a poset~$E$ are equivalent by the equivalence
  relation generated by~$\sim$.
\end{lemma}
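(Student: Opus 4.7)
The plan is to proceed by induction on the cardinality $n$ of $\events{E}$, writing $\sim^*$ for the equivalence closure of $\sim$. The cases $n \leq 1$ are trivial since a linearization is then unique. For the inductive step, let $x, x' \colon \intset{n} \to \events{E}$ be two linearizations and set $e := x'(0)$. Because $e$ sits at position $0$ of the linearization $x'$, it must be a minimal event of $E$: otherwise some $e' <_E e$ would have to appear later in $x'$, violating the defining condition of Definition~\ref{def:linearization}.

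Let $k$ be the unique index with $x(k) = e$. First I claim that for every $i < k$, the events $x(i)$ and $e$ are independent. They are distinct by bijectivity; $x(i) <_E e$ is ruled out by minimality of $e$; and $e <_E x(i)$ is ruled out because $i < k$ and $x$ is a linearization, so $x(i) \not\geq_E x(k) = e$. Using this, I bubble $e$ leftwards in $x$: swap the events at positions $k-1$ and $k$, then those at positions $k-2$ and $k-1$, and so on, until $e$ reaches position $0$. At each step the two swapped events are $e$ and some $x(i)$ with $i < k$, which we just showed to be independent, so each swap is a valid instance of $\sim$ and produces another linearization (the events preceding $e$ have not changed, and the event just displaced rightwards of $e$ was previously followed by events that did not dominate it in $x$). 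After $k$ such swaps we obtain a linearization $x''$ of $E$ with $x''(0) = e$ and $x \sim^* x''$.

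Since $x''(0) = e = x'(0)$, deleting position $0$ from each yields two linearizations of the subposet $E \setminus \set{e}$, which has strictly fewer events. By the induction hypothesis these restrictions are related by a chain of swaps of consecutive independent events in $E \setminus \set{e}$, and each such swap lifts directly to a swap at positions $i+1,\,i+2$ of the full linearizations of $E$: the order on $E \setminus \set{e}$ is the restriction of the order on $E$, so independence is preserved. Chaining gives $x'' \sim^* x'$, and transitivity yields $x \sim^* x'$. The only point requiring care is checking that each intermediate step of the bubbling is itself a linearization; this is immediate from the observation that the set of events strictly preceding $e$ is unchanged throughout the bubbling, so the incomparability claim proved for $x$ continues to apply unchanged.
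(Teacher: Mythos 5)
Your proof is correct, and it is the standard induction the paper has in mind: the paper omits the proof of this lemma entirely (stating only that the omitted proofs ``can generally be done by easy induction''), and your argument --- bubble $x'(0)$, which is necessarily minimal, to the front of $x$ by adjacent swaps of independent events, then apply the induction hypothesis to $E\setminus\set{x'(0)}$ --- is exactly the expected one. The points you flag as requiring care (each intermediate sequence is again a linearization, and swaps in the subposet lift to swaps in $E$) are handled adequately.
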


\noindent
In the following, by a \emph{linearization of the internal events} of a morphism
$(s,E,t):m\to n$ in $\P$, we mean a linearization of the poset
$E\setminus\pa{s\pa{\intset{m}}\cup t\pa{\intset{n}}}$, or equivalently an
injective function $\intset{k}\to\events{E}$ whose image is the set of internal
events of~$E$ and which satisfies the condition of
Definition~\ref{def:linearization}.

We now introduce a notion of ``canonical factorization'' for morphisms of $\P$,
which will depend on a linearization of the internal events of the morphism.
Given $n\in\N$ and $I\subseteq[n]$, we write $X^n_I:n\to n+1$ for the morphism
with one internal node~$x$, so that
$\events{X}^n_I=[n]\uplus\set{x}\uplus[n+1]$, with the canonical injections as
source~$s:[n]\to X^n_I$ and target~$t:[n+1]\to X^n_I$, and whose partial order
is such that
\[
\forall i\in[n],\ s(i)<t(i)
\qqtand
\forall i\in I,\ s(i)<x
\qqtand
x<t(n)
\]

\begin{example}
  \label{ex:X302}
  For instance, the morphism $X^3_{\set{0,2}}:3\to 4$ is
  $
  \vxym{
    \circ&\circ&\circ&\circ\\
    &&&\ar@{-}[dlll]\ar@{-}[dl]\bullet\ar@{-}[u]\\
    \circ\ar@{-}[uu]&\circ\ar@{-}[uu]&\circ\ar@{-}[uu]\\
  }
  $.
\end{example}

\noindent
These morphisms can be used as basic ``building blocks'' for morphisms as follows.

\begin{lemma}
  \label{lemma:poset-cf-comp}
  Suppose given
  \begin{equation}
    \label{eq:fact}
    m,k,n\in\N,
    \qquad
    \text{sets $I_j\subseteq [m+j]$ indexed by $j\in[k]$},
    \qquad
    \text{and a relation $R:m+k\to n$.}
  \end{equation}
  The composite
  \begin{equation}
    \label{eq:fact-comp}
    R\circ X^{m+k-1}_{I_{k-1}}\circ\ldots\circ X^{m+1}_{I_1}\circ X^m_{I_0}
  \end{equation}
  is the morphism $(s,E,t):m\to n$ with $k$ internal events, that we denote by
  $x(i)$ with $i\in[k]$, which is the smallest partial order such that
  $x(i)<x(j)$ when $m+i\in I_j$, $s(i)<x(j)$ when $i\in I_j$, $x(i)<t(j)$ when
  $(m+i,j)\in R$, and $s(i)<t(j)$ when $(i,j)\in R$. Moreover, the function
  $x:\intset{k}\to\events{E}$ thus defined is a linearization of the internal
  events of~$E$.
\end{lemma}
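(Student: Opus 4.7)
The plan is to argue by induction on $k$. For the base case $k = 0$, the composite is simply $R$, which by Lemma~\ref{lemma:rel-poset} is the morphism with no internal events whose only order relations are $s(i) < t(j)$ for $(i,j) \in R$, matching the claim. For the inductive step, I factor the composite as $R \circ X^{m+k-1}_{I_{k-1}} \circ T$ where $T = X^{m+k-2}_{I_{k-2}} \circ \cdots \circ X^m_{I_0} \colon m \to m+k-1$, and apply the induction hypothesis to $\id_{m+k-1} \circ T$. This yields the structure of $T$: a poset with $k-1$ internal events $x(0), \ldots, x(k-2)$ in the claimed configuration, with the key refinement that its target event at position $m+i$ (for $i \in [k-1]$) is the unique event directly above $x(i)$, while the target event at position $i < m$ is the passthrough above $s(i)$.

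I then unfold the two remaining compositions via pushouts in $\Pos$ followed by interface deletion. When composing $X^{m+k-1}_{I_{k-1}}$ with $T$, the source event of $X^{m+k-1}_{I_{k-1}}$ at position $j$ is identified with $t_T(j)$: if $j = m+i$ with $i \in [k-1]$, the generating relation $s'(j) < x(k-1)$ of $X^{m+k-1}_{I_{k-1}}$ (triggered exactly when $j \in I_{k-1}$) yields $x(i) < x(k-1)$ precisely when $m+i \in I_{k-1}$; if $j < m$, it yields $s(j) < x(k-1)$ precisely when $j \in I_{k-1}$. The new internal event $x(k-1)$ survives in the composite, sitting below the target at position $m+k-1$, and all other target positions remain passthroughs. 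Composing further with $R \colon m+k \to n$ is a single pushout along the discrete poset $[m+k]$, which by Lemma~\ref{lemma:rel-poset} adds exactly the relations $x(i) < t(j)$ when $(m+i, j) \in R$ and $s(i) < t(j)$ when $(i, j) \in R$, recovering the announced poset.

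The linearization claim for $x \colon [k] \to \events{E}$ then follows because every generated relation $x(i) < x(j)$ requires $m+i \in I_j \subseteq [m+j]$ and hence $i < j$; consequently upward chains from $x(j)$ in the resulting poset can only reach events $x(l)$ with $l > j$ or (maximal) target events, never $x(i)$ with $i < j$, so $x(i) \not\geq x(j)$ whenever $i < j$. The main obstacle is the pushout bookkeeping in the inductive step: one has to verify that the transitive closure after gluing introduces no spurious comparisons — which crucially uses that source events of each $X^n_I$ and of $R$ are minimal and target events are maximal — and conversely that every relation listed in the lemma arises from some generator. The identification of each interface position with the event directly above the corresponding internal event, supplied by the induction hypothesis, is what reduces this verification to a routine case analysis.
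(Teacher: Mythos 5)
The paper omits the proof of this lemma, noting only that it can be done by ``easy (but not necessarily short) induction,'' and your induction on $k$ --- peeling off $X^{m+k-1}_{I_{k-1}}$, applying the hypothesis to the prefix with the identity relation, and tracking how each pushout-and-delete step closes the generating relations under transitivity --- is a correct execution of exactly that intended argument. Your key observations (the down-set of the interface event at position $m+i$ is $\{x(i)\}\cup{\downarrow}x(i)$, minimality/maximality of interface images prevents spurious comparisons, and $m+i\in I_j$ forces $i<j$ hence the linearization property) are precisely the points that make the induction go through.
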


\noindent
The data \eqref{eq:fact} is called a \emph{factorization} of the
morphism~\eqref{eq:fact-comp}, the morphism~\eqref{eq:fact-comp} is called the
\emph{composite} of the factorization \eqref{eq:fact}, and the above
linearization of the internal events is said to be \emph{induced} by the
factorization. From the preceding lemma, it is easy to deduce that any morphism
of~$P$ can be put into this form:

\begin{lemma}
  \label{lemma:poset-cf-fact}
  Suppose given a morphism $(s,E,t):m\to n$ of~$\P$ with $k$ internal events,
  together with a linearization $x:\intset{k}\to\events{E}$ of the internal
  events of~$E$. We have
  \begin{equation}
    \label{eq:poset-fact}
    E \qeq R\circ X^{m+k-1}_{I_{k-1}}\circ\ldots\circ X^{m+1}_{I_1}\circ X^m_{I_0}
  \end{equation}
  with
  \[
  I_j\qeq\setof{i\in[m]}{s(i)<x(j)}\cup\setof{m+i\in\intset{m+k}}{x(i)<x(j)}
  \]
  and $R:m+k\to n$ is the relation defined by
  \[
  R\qeq \setof{(i,j)\in[m]\times[n]}{s(i)<t(j)}\cup\setof{(m+i,j)}{i\in[k],j\in[n], x(i)<t(j)}
  \]
\end{lemma}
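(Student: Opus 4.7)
The plan is to apply Lemma \ref{lemma:poset-cf-comp} to the factorization data $(I_j)_{j\in[k]}, R$ given in the statement, yielding a morphism $(s',E',t'):m\to n$ with the same cardinalities as $(s,E,t)$, and then show that $E$ and $E'$ are isomorphic via the obvious identification of events (both have underlying event set in bijection with $[m]\uplus[k]\uplus[n]$ via $s,x,t$ on one side and the analogous data on the other).

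First I would use Lemma \ref{lemma:poset-cf-comp} to recall that $\leq_{E'}$ is the smallest partial order on $[m]\uplus[k]\uplus[n]$ containing the direct relations: $s'(i)<x'(j)$ whenever $i\in I_j$, $x'(i)<x'(j)$ whenever $m+i\in I_j$, $s'(i)<t'(j)$ whenever $(i,j)\in R$, and $x'(i)<t'(j)$ whenever $(m+i,j)\in R$. By the very definitions of $I_j$ and $R$ given in the statement, each of these direct relations holds by definition in $E$ (under the event identification). Since $\leq_E$ is itself a partial order, its transitive-reflexive closure contains $\leq_{E'}$, giving the inclusion $\leq_{E'}\;\subseteq\;\leq_E$.

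For the reverse inclusion, I would argue by case analysis on a comparison $a<_E b$ in $E$, using that sources are minimal and targets are maximal, so the only possibilities are $s(i)<_E x(j)$, $s(i)<_E t(j)$, $x(i)<_E x(j)$, or $x(i)<_E t(j)$. The first, second and fourth cases are direct relations of $E'$ by the definitions of $I_j$ and $R$. For the third case $x(i)<_E x(j)$, I would use the linearization property of Definition~\ref{def:linearization}: since $x$ is a linearization, $i\geq j$ would force $x(j)\not\geq x(i)$ (when $j<i$) or $i=j$ (with $x$ bijective), both contradicting $x(i)<_E x(j)$; hence $i<j$, so $m+i\in[m+j]$ and $m+i\in I_j$, giving the required direct relation in $E'$.

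The main obstacle, and what makes the case analysis slightly subtle, is precisely ensuring that the constraint $I_j\subseteq[m+j]$ in the factorization data is actually satisfied, i.e., that the linearization order is compatible with the partial order on internal events; this is exactly where the linearization hypothesis is used. Once this is handled, combining the two inclusions shows $E=E'$ as posets, and the sources and targets match on the nose by construction of the composite in Lemma~\ref{lemma:poset-cf-comp}, so the identity on events provides the required isomorphism of morphisms in~$\P$.
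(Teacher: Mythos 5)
Your proposal is correct and matches the paper's intended argument: the paper omits the proof of this lemma, saying only that it is ``easy to deduce'' from Lemma~\ref{lemma:poset-cf-comp}, and your two-inclusion comparison of the generated order of the composite with $\leq_E$ --- together with the use of the linearization property to guarantee $I_j\subseteq[m+j]$ --- is exactly the missing verification. The case analysis is complete for the reason you give (sources are minimal, targets are maximal), so nothing further is needed.
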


\begin{example}
  \label{ex:poset-cf}
  The poset $E:1\to 2$ of the left of
  \[
  \vxym{
    &\circ&&\circ\\
    \\
    &\ar@{}[l]|-d\bullet\ar@{-}[uu]\\
    \\
    &\ar@{}[l]|-b\bullet\ar@{-}[uu]\ar@{-}[uuuurr]&&\bullet\ar@{-}[uuuu]\ar@{}[r]|-c&\\
    &&\ar@{}[l]|-a\bullet\ar@{-}[ul]\ar@{-}[ur]\\
    &&\circ\ar@{-}[u]\\
  }
  \qeq
  \vxym{
    \circ\ar@{-}[dr(1)]\ar@{-}[drr(1)]\ar@{-}[drrrr(1)]&\circ\ar@{-}[dl(1)]\ar@{-}[dr(1)]\ar@{-}[drr(1)]\\
    &&&&\\
    &&&&\ar@{-}[dllll(1)]\ar@{-}[dlll(1)]\ar@{-}[dll(1)]\bullet\ar@{-}[u(1)]\ar@{}[r]|-d&\\
    &&&\ar@{-}[dlll(1)]\ar@{-}[dll(1)]\bullet\ar@{-}[uu(1)]\ar@{}[r]|-c&\\
    &&\ar@{-}[dll(1)]\ar@{-}[dl]\bullet\ar@{-}[uuu(1)]\ar@{}[r]|-b&\\
    &\ar@{-}[dl]\bullet\ar@{-}[uuuuu]\ar@{}[r]|-a&\\
    \circ\ar@{-}[uuuuuu]&
  }
  \qeq
  \vxym{
    \circ\ar@{-}[dr(1)]\ar@{-}[drrr(1)]\ar@{-}[drrrr(1)]&\circ\ar@{-}[dl(1)]\ar@{-}[dr(1)]\ar@{-}[drr(1)]\\
    &&&&\\
    &&&&\ar@{-}[dllll(1)]\ar@{-}[dlll(1)]\ar@{-}[dl]\bullet\ar@{-}[u(1)]\ar@{}[r]|-d&\\
    &&&\ar@{-}[dlll(1)]\ar@{-}[dll(1)]\bullet\ar@{-}[uu(1)]\ar@{}[r]|-b&\\
    &&\ar@{-}[dll(1)]\ar@{-}[dl]\bullet\ar@{-}[uuu(1)]\ar@{}[r]|-c&\\
    &\ar@{-}[dl]\bullet\ar@{-}[uuuuu]\ar@{}[r]|-a&\\
    \circ\ar@{-}[uuuuuu]
  }
  \]
  admits the factorization
  \[
  R\circ X^4_{\set{0,1,2}}\circ X^3_{\set{0,1}}\circ X^2_{\set{0,1}}\circ X^1_{\set{0}}
  \quad\text{with}\quad
  R=\set{(0,0),(1,0),(2,0),(4,0),(0,1),(1,1),(2,1),(3,1)}
  \]
  as shown in the middle. Another possible factorization of the same morphism is
  depicted on the right.
\end{example}

\noindent
Because partial orders are transitive, by using Lemma~\ref{lemma:poset-cf-comp}
the factorizations~\eqref{eq:poset-fact} of morphisms of~$\P$ provided by
Lemma~\ref{lemma:poset-cf-fact} can easily be shown to be transitive, in the
following sense:

\begin{definition}
  \label{def:cf-trans}
  A factorization of a morphism of $\P$ of the form \eqref{eq:poset-fact} is
  \emph{transitive} when
  \begin{itemize}
  \item given $i\in[m+k]$ and $i',i''\in[k]$ such that $i\in I_{i'}$ and
    $m+i'\in I_{i''}$, we have $i\in I_{i''}$,
  \item given $i\in[m+k]$, $i'\in[k]$ and $i''\in[n]$ such that $i\in I_{i'}$ and
    $(m+i',i'')\in R$ we have $(i,i'')\in R$.
  \end{itemize}
  A transitive factorization of a morphism is called a \emph{canonical
    factorization}.
\end{definition}

\begin{lemma}
  The factorizations~\eqref{eq:poset-fact} of a morphism of~$\P$ provided by
  Lemma~\ref{lemma:poset-cf-fact} are transitive. Moreover, the factorization
  associated to the composite of a transitive factorization, with the induced
  linearization of the internal events, is itself.
\end{lemma}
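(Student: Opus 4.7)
The plan is to handle the two assertions in sequence, using the explicit description of the composite's partial order in Lemma~\ref{lemma:poset-cf-comp} and the formulas for $I_j$ and $R$ in Lemma~\ref{lemma:poset-cf-fact}.

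For transitivity of the factorization built by Lemma~\ref{lemma:poset-cf-fact}, I would simply unfold the definitions. Fix $(s,E,t):m\to n$ with linearization $x$ of the internal events. Membership $m+i'\in I_{i''}$ is equivalent to $x(i')<x(i'')$; membership $i\in I_{i'}$ means $s(i)<x(i')$ when $i<m$ and $x(i-m)<x(i')$ otherwise; and $(m+i',i'')\in R$ means $x(i')<t(i'')$. Both clauses of Definition~\ref{def:cf-trans} then reduce by a one-line case distinction to transitivity of $\leq_E$ in $E$.

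For the round-trip, let $(I_j,R)$ be a transitive factorization and let $(s,E,t)$ be its composite, with internal events labelled $x(0),\ldots,x(k-1)$ in the order of composition. Lemma~\ref{lemma:poset-cf-comp} tells us that $x$ is precisely the induced linearization of the internal events, and describes $\leq_E$ as the smallest partial order containing certain generating strict inequalities. Applying Lemma~\ref{lemma:poset-cf-fact} to $(s,E,t,x)$ yields new data $(I'_j,R')$; the inclusions $I_j\subseteq I'_j$ and $R\subseteq R'$ follow immediately, since each of their membership conditions is already one of the generating inequalities. The content of the lemma is the converse inclusion: the partial-order closure must not introduce spurious pairs of the forms $s(i)<x(j)$, $x(i)<x(j)$, $s(i)<t(j)$, or $x(i)<t(j)$. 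For this I would argue by induction on the length of a witnessing chain $y_0<y_1<\cdots<y_r$ of generating inequalities. Sources are minimal and targets maximal among the generators, so $s(\cdot)$ can appear only at $y_0$ and $t(\cdot)$ only at $y_r$, which forces every intermediate $y_l$ to be an internal event $x(i')$; a single step of the induction then replaces the first two inequalities $y_0<y_1<y_2$ by the single inequality $y_0<y_2$ using the first clause of Definition~\ref{def:cf-trans} when $y_2$ is an internal event, or the second clause when $y_2=t(i'')$.

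The main obstacle is precisely this case analysis for the round-trip: one has to verify that the two transitivity clauses together cover every shape of length-two shortcut that can arise, and that the induction on chain length is well-founded across all four categories of endpoint. Once this bookkeeping is settled, everything else is routine, and the equalities $I'_j=I_j$ and $R'=R$ follow by double inclusion.
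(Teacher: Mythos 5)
Your proof is correct and follows exactly the route the paper indicates: the paper omits the proof of this lemma, remarking only that transitivity of the factorization follows from transitivity of $\leq_E$ via Lemma~\ref{lemma:poset-cf-comp}, which is your first paragraph, and your round-trip argument (double inclusion plus induction on the length of a witnessing chain of generating inequalities, using that sources and targets can only occur at the ends of such a chain) is the standard completion of the omitted verification. No gaps.
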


\noindent
To sum up, we have just shown that the data~\eqref{eq:fact} of a
factorization~\eqref{eq:poset-fact} is a way to faithfully encode a morphism
of~$\P$ together with a linearization of its internal events:

\begin{proposition}
  \label{prop:poset-lin-cf}
  Pairs $(E,x)$, constituted of a morphism~$E$ of~$\P$ together with a
  linearization $x$ of its internal events, are in bijection with factorizations
  of the form~\eqref{eq:poset-fact} which are transitive.
\end{proposition}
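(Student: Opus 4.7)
The plan is to assemble the bijection directly from Lemmas~\ref{lemma:poset-cf-comp} and~\ref{lemma:poset-cf-fact} together with the lemma stated immediately before the proposition. In one direction, given a pair $(E,x)$, I would form the data $(I_j)_{j \in [k]}$ and $R$ prescribed by Lemma~\ref{lemma:poset-cf-fact}; this is a factorization of $E$ whose composite is $E$ itself and whose induced linearization is $x$, and by the preceding lemma it is transitive, so the output lands in the set of transitive factorizations. In the opposite direction, given a transitive factorization \eqref{eq:fact}, I would send it to the pair consisting of its composite \eqref{eq:fact-comp} and the linearization of the internal events induced by Lemma~\ref{lemma:poset-cf-comp}.

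To verify that these two assignments are mutually inverse, one composite is essentially free of charge: the second clause of the lemma immediately preceding the proposition asserts exactly that applying the factorization map of Lemma~\ref{lemma:poset-cf-fact} to the composite of a transitive factorization, using the induced linearization, recovers the original data $(I_j, R)$. For the other composite, one starts from $(E,x)$, extracts $(I_j, R)$ via Lemma~\ref{lemma:poset-cf-fact}, and applies Lemma~\ref{lemma:poset-cf-comp} to obtain a morphism whose internal events are indexed by $x$ by construction. Its partial order is the smallest one enforcing the comparisons $s(i)<x(j)$, $x(i)<x(j)$, $s(i)<t(j)$, $x(i)<t(j)$ dictated by $(I_j, R)$. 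Since those formulas in Lemma~\ref{lemma:poset-cf-fact} record \emph{every} such comparison in $E$ (not merely covers), and since sources are minimal and targets maximal by hypothesis, the only relations $\leq_E$ contains are precisely those enforced; antisymmetry and transitivity of $\leq_E$ then imply that the reconstructed order coincides with $\leq_E$ on the nose.

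The most delicate point is the last step of the previous paragraph: one must rule out that the ``smallest partial order'' produced by Lemma~\ref{lemma:poset-cf-comp} could differ from $\leq_E$ either by missing or by adding a comparison. Missing is impossible because every pairwise comparison in $E$ is explicitly listed in $(I_j, R)$, and addition is prevented because transitivity of the factorization — which is what Definition~\ref{def:cf-trans} guarantees — means no new comparison is forced by closing under transitive composition of the listed ones. Once this is settled, the bijection is established and the proposition follows.
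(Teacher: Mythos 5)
Your proposal is correct and follows essentially the same route as the paper, which presents this proposition as a direct summary of Lemmas~\ref{lemma:poset-cf-comp} and~\ref{lemma:poset-cf-fact} together with the transitivity lemma immediately preceding it (whose second clause gives one of the two round trips, the other being the equality~\eqref{eq:poset-fact} asserted in Lemma~\ref{lemma:poset-cf-fact} itself). Your extra verification that the reconstructed order coincides with $\leq_E$ is a re-derivation of that equality rather than a new ingredient, but it is sound.
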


Because a morphism might admit multiple linearizations of its internal events, a
morphism generally admits multiple
factorizations~\eqref{eq:poset-fact}. However, we know by
Lemma~\ref{lemma:lin-swap} that any two of them are related by $\sim$, and the
following lemma characterizes how replacing a linearization by another one
related by $\sim$ affects the factorization.

\begin{lemma}
  \label{lemma:P-switch}
  Suppose that $x:[k]\to E$ is a linearization of $E:m\to n$ and $i\in[k]$ is
  such that $x'=x\circ\tau^k_i$ is another linearization of~$E$. Suppose that
  \[
  R\circ X^{m+k-1}_{I_{k-1}}\circ\ldots X^m_{I_0}
  \qqtand
  R'\circ X^{m+k-1}_{I'_{k-1}}\circ\ldots\circ X^m_{I'_0}
  \]
  are the factorizations associated by Proposition~\ref{prop:poset-lin-cf} to
  $(E,x)$ and $(E,x')$ respectively. Then we have $I_j'=I_j$ for $0\leq j<i+1$,
  $I_j'=\tau^{m+j}_{m+i}(I_j)$ for $i+1\leq j<k$, and
  $R'=R\circ\tau^{m+k}_{m+i}$.
\end{lemma}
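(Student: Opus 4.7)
The plan is to unfold the defining formulas for $I_j$, $I'_j$, $R$, and $R'$ given by Lemma~\ref{lemma:poset-cf-fact} and compare them directly, using the fact that $x$ and $x'$ differ by a single adjacent transposition. A key preliminary observation is that since both $x$ and $x'$ are linearizations, the swapped events $x(i)$ and $x(i+1)$ must be incomparable in $E$: otherwise one of them would violate the linearization condition at either position $i$ or $i+1$. This incomparability will be essential in handling the boundary positions.

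By Lemma~\ref{lemma:poset-cf-fact}, each $I_j$ splits as an \emph{external} part $\{i'\in[m] : s(i') <_E x(j)\}$, which depends only on the event $x(j)$ and not on the linearization, together with an \emph{internal} part $\{m+i' : i'<j,\ x(i') <_E x(j)\}$, which uses the positions in the linearization to label internal events; analogously for $I'_j$. For a position $j$ with $x'(j)=x(j)$ (i.e.\ $j\neq i, i+1$), the external parts coincide, and the internal parts differ only through the renaming that exchanges the labels $m+i$ and $m+i+1$: when both of these lie in $[m+j]$ this renaming is exactly $\tau^{m+j}_{m+i}$, and otherwise it acts trivially, yielding the stated formulas in both ranges of $j$. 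For the boundary positions $j=i$ and $j=i+1$, the incomparability of $x(i)$ and $x(i+1)$ ensures that the event whose position just changed does not appear among the internal predecessors, which is exactly what is needed to match the defining expressions of $I'_i$ and $I'_{i+1}$ against the claimed formulas without requiring any label outside the admissible domain $[m+j]$.

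The relation $R$ is treated by the same method: its external part $\{(i',j')\in[m]\times[n] : s(i') <_E t(j')\}$ is linearization-independent, while its internal part simply re-labels the indices $m+i$ and $m+i+1$ when passing from $x$ to $x'$, giving $R'=R\circ\tau^{m+k}_{m+i}$. The main obstacle is the careful bookkeeping at the boundary positions $j\in\{i,i+1\}$: one must invoke incomparability of $x(i)$ and $x(i+1)$ precisely to exclude labels that would otherwise fall outside the admissible range, and this is what ensures that the transposition formulas extend uniformly across the whole factorization.
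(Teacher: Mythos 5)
Your overall strategy is the right one: the paper leaves this proof to the reader as a routine verification, and unfolding the formulas of Lemma~\ref{lemma:poset-cf-fact} while tracking how the labels $m+i$ and $m+i+1$ are exchanged is exactly what such a verification looks like. Your handling of the positions $j<i$ and $j>i+1$, of the relation $R$, and your preliminary observation that $x(i)$ and $x(i+1)$ must be incomparable, are all correct.

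The gap is at the boundary positions $j=i$ and $j=i+1$, which is where all the content of the lemma sits and where you replace the computation by the assertion that incomparability makes everything ``match the claimed formulas''. It does not. Unfolding the definition, $I'_i$ is the set of labels of the predecessors of $x'(i)=x(i+1)$; incomparability only tells you that the label $m+i$ (now carried by $x(i)=x'(i+1)$) is absent from $I_{i+1}$, and what the computation actually yields is $I'_i=I_{i+1}$ and, symmetrically, $I'_{i+1}=I_i$: the two index sets are \emph{exchanged}. This agrees with the algebraic identity $X^{n+1}_J\circ X^n_I=G^n\circ X^{n+1}_I\circ X^n_J$ proved later in the paper (where $I$ and $J$ trade places), but not with the printed formulas $I'_i=I_i$ and $I'_{i+1}=\tau^{m+i+1}_{m+i}(I_{i+1})$, the latter being equal to $I_{i+1}$ since $m+i\notin I_{i+1}$; the printed formulas hold only when $I_i=I_{i+1}$, as happens in Example~\ref{ex:poset-cf}, which is why that example does not detect the difference. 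A minimal instance: take $m=1$, $k=2$, $n=2$, with $s(0)<u<t(0)$ and $v<t(1)$, $u$ and $v$ incomparable, $x=(u,v)$, $x'=(v,u)$, $i=0$; then $I_0=\{0\}$ while $I'_0=\emptyset$. So either the statement must be corrected to exchange the index sets at positions $i$ and $i+1$, or a proof of it as printed would have to do something other than what you describe --- and since the computation contradicts it, the latter is impossible. You should carry out the boundary computation explicitly and record the resulting correction rather than asserting that it comes out as stated.
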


\begin{example}
  The two factorizations of the morphism given in Example~\ref{ex:poset-cf} only
  differ by a swapping of~$b$ and~$c$. It can easily be checked that the
  relation in the second factorization can be deduced from the one of the first
  by precomposing by the transposition~$\tau^5_2$.
\end{example}

\section{A presentation of the monoidal category of finite posets}
\label{sec:pres}
As announced earlier, we use the general methodology explained at the end
of Section~\ref{sec:presenting} in order to show this. We first define an
interpretation $\intp{-}:\Sigma^*\to\P$ by defining the interpretation of the
generators, as per Remark~\ref{rem:free-moncat}. The object generator
$1\in\Sigma_1$ is interpreted as the object $1$ of $\P$, and the interpretations
of the morphism generators are the following morphisms of $\P$:
\[
\begin{array}{c@{\qquad\qquad}c@{\qquad\qquad}c@{\qquad\qquad}c@{\qquad\qquad}c@{\qquad\qquad}c}
  \vxym{
    \circ\\
    \\
    {\phantom{\circ}}\\
  }
  &
  \vxym{
    &\circ&\\
    \\
    \circ\ar@{-}[uur]&&\ar@{-}[uul]\circ
  }
  &
  \vxym{
    {\phantom{\circ}}\\
    \\
    \circ
  }
  &
  \vxym{
    \circ&&\circ\\
    \\
    &\ar@{-}[uul]\circ\ar@{-}[uur]
  }
  &
  \vxym{
    \circ\\
    \bullet\ar@{-}[u]\\
    \circ\ar@{-}[u]
  }
  &
  \vxym{
    \circ&\circ\\
    \\
    \circ\ar@{-}[uur]&\circ\ar@{-}[uul]
  }
  \\[5ex]
  \intp{\eta}&\intp{\mu}&\intp{\varepsilon}&\intp{\delta}&\intp{\sigma}&\intp{\gamma}
\end{array}
\]
Moreover, the interpretation leaves the rules invariant:

\begin{lemma}
  For any rule $\alpha\To\beta$ of $R$, we have
  $\intp{\alpha}=\intp{\beta}$. Otherwise said,
  $(1,\intp{\eta},\intp{\mu},\intp{\varepsilon},\intp{\delta},\intp{\sigma})$ is
  a poalgebra in $\P$.
\end{lemma}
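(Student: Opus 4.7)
The plan is a direct case analysis: for each of the rules in $R$, compute both $\intp{\alpha}$ and $\intp{\beta}$ explicitly as morphisms of $\P$ and check that the two resulting posets (with their source/target injections) coincide. This is feasible because every generator is interpreted as a very small poset, each rule involves at most $\sim\!6$ generators, and composition in $\P$ is completely concrete: glue along the discrete interface by pushout in $\Pos$, then delete the intermediate external events.

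I would first sort the rules into classes according to difficulty. The purely symmetric/naturality rules for $\gamma$ (involutivity, Yang--Baxter, and the eight naturality rules with $\eta,\varepsilon,\mu,\delta,\sigma$) reduce on both sides to the same permutation of external events, since $\intp{\gamma}$ is a discrete poset, none of $\intp{\eta},\intp{\varepsilon},\intp{\mu},\intp{\delta}$ has internal events, and $\intp{\sigma}$ has one internal event whose relative position among the external wires is unchanged by a swap above or below. The monoid and comonoid axioms (unit, counit, associativity, coassociativity, commutativity, cocommutativity) are immediate: both sides build the ``tree poset'' with no internal events in which every input lies below every output involved. So the only rules that require any genuine computation are the four ``mixed'' rules $\delta\circ\mu$, $\varepsilon\circ\eta$, $\mu\circ\delta$, and the transitivity rule $\mu\circ(\id_1\otimes\sigma)\circ\delta\To\sigma$.

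For each of the four I would explicitly list the events and comparabilities. For $\varepsilon\circ\eta$ the pushout glues the two singletons and removal deletes the common event, yielding the identity on $0$. For $\mu\circ\delta$ the three-element poset obtained by gluing has one interface event that is deleted, leaving the identity $\id_1$. For the bialgebra rule, the left-hand side glues $\mu$ then $\delta$ and, after removing the interface, gives the complete bipartite relation $2\to 2$ with all four comparabilities $s(i)<t(j)$; the right-hand side ends up with the same relation after the swap in the middle causes both $\mu$'s to receive inputs descended from both original $s(i)$. For transitivity, the composite has one internal event $x$ with $s(0)<x<t(0)$ (the extra relation $s(0)<t(0)$ is implied by transitivity and is therefore not visible in the Hasse diagram), matching $\intp{\sigma}$ exactly.

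The main obstacle, modest as it is, is the bookkeeping for the bialgebra rule: one has to keep careful track of the position of events under the symmetry $\gamma$ in the middle to see that after composing with the two copies of $\mu$ at the top, both outputs end up above both inputs. Once this is verified, the conclusion that $(1,\intp{\eta},\intp{\mu},\intp{\varepsilon},\intp{\delta},\intp{\sigma})$ is a poalgebra in $\P$ follows by definition, since the poalgebra axioms are literally the rules of $R$ read as equations, and by Remark~\ref{rem:free-moncat-rel} the interpretation descends to a monoidal functor $\Sigma^*/{\equiv_R}\to\P$.
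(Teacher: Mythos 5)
The paper gives no proof of this lemma at all (it is one of the routine verifications left to the reader), and your rule-by-rule computation in $\P$ is exactly the intended argument; your classification of the rules and your claimed outcomes for the four nontrivial ones ($\varepsilon\circ\eta$, $\mu\circ\delta$, the bialgebra rule, and transitivity) are all correct. One slip worth fixing: $\intp{\gamma}$ is not a \emph{discrete} poset but the crossing relation with $s(0)<t(1)$ and $s(1)<t(0)$ (the image of the transposition under the embedding of Lemma~\ref{lemma:rel-poset}) --- taken literally, discreteness would make the check of $\gamma\circ\gamma\To\id_2$ fail, although the reasoning you actually use (``both sides reduce to the same permutation of external events'') relies on the correct picture and goes through.
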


\noindent
We write $\TP$ for the category $\TP=\pa{\Sigma^*/{\equiv_R}}$. As noticed in
Remark~\ref{rem:free-moncat-rel}, by the above lemma the interpretation functor
$\intp{-}:\Sigma^*\to\P$ induces a quotient functor $\intp{-}:\TP\to\P$, which
is bijective on objects. Our aim in this section is to show that it is an
isomorphism, \ie $\TP\cong\P$.

Given $n\in\N$ and $i\in[n]$, we define notations for the following morphisms
in~$\TP$:
\begin{itemize}
\item $I^n=\id_{n+1}:n+1\to n+1$
\item $H^n=(\id_n\otimes\eta):n\to n+1$
\item $S^n=(\id_n\otimes\sigma):n+1\to n+1$
\item $G^n=\id_n\otimes\gamma_{1,1}:n+2\to n+2$
\item $W^n_i=(\id_n\otimes\mu)\circ(\id_{i+1}\otimes\gamma_{1,n-i-1}\otimes\id_1)\circ(\id_i\otimes\delta\otimes\id_{n-i}):n+1\to n+1$
\end{itemize}
Graphically,
\[
I^n=\strid[height=1cm]{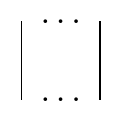}
\qquad
H^n=\strid[height=1cm]{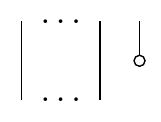}
\qquad
S^n=\strid[height=1cm]{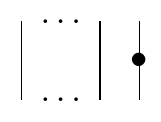}
\qquad
G^n=\strid[height=2cm]{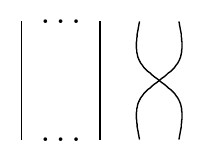}
\qquad
W^n_i=\strid[height=2cm]{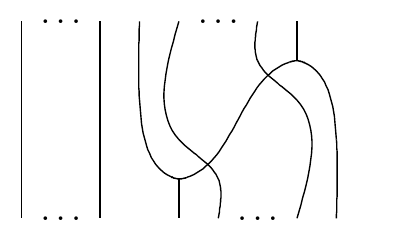}
\]

Using the relations in $R$ it is easy to show that
\begin{lemma}
  For every $n\in\N$, and $i,j\in[n]$, we have $W^n_j\circ W^n_i=W^n_i\circ W^n_j$ and
  $W^n_i\circ W^n_i=W^n_i$.
\end{lemma}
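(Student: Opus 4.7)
The plan is to verify both identities by direct string-diagram manipulation in $\TP$, relying only on the qualitative bicommutative bialgebra axioms together with the naturality rules for the symmetry $\gamma$; the generator $\sigma$ and the transitivity rule play no role here, so the argument really takes place in the sub-theory of qualitative bicommutative bialgebras, whose presented PROP is $\Rel$ by Proposition~\ref{prop:free-bialgebra}.

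For the idempotence $W^n_i \circ W^n_i = W^n_i$, I would begin by drawing the composite: at position $i$ there is a stack of two $\delta$'s, and at position $n$ a stack of two $\mu$'s, connected by two copies of the symmetry $\gamma_{1,n-i-1}$. Using coassociativity I would merge the two $\delta$'s into an iterated $(\delta \otimes \id_1)\circ\delta$, then slide the resulting three strands rightward using the naturality rules for $\gamma$ with respect to $\delta$ and the Yang--Baxter equation, so that two of the three copies emerge adjacent just below the two $\mu$'s. By associativity of $\mu$ these two copies feed into an innermost $\mu$, which combined with the nearby $\delta$ that produced them is exactly $\mu \circ \delta$ and therefore reduces to the identity by the qualitative axiom; what remains is a single $\delta$ at~$i$, a single symmetry, and a single $\mu$ at~$n$, that is, $W^n_i$.

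For the commutativity $W^n_j \circ W^n_i = W^n_i \circ W^n_j$, the case $i=j$ is the idempotence just proved, so assume $i<j$. Since the two $\delta$'s now occur at distinct positions they commute trivially, and the intervening symmetries can be rerouted using naturality of $\gamma$ until both composites reduce to the same diagram, except for the order in which the two duplicated strands (from $i$ and from $j$) are fed into the double $\mu$ stacked at position $n$. That remaining ambiguity is eliminated by the commutativity rule $\mu\circ\gamma=\mu$ together with associativity of~$\mu$.

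The main obstacle is purely bookkeeping: both rewrites pass many symmetries through $\delta$, $\mu$ and each other, and the diagrams become crowded; this is presumably why the paper declares the lemma ``easy to show''. The graphical calculus for symmetric monoidal categories makes the manipulations essentially mechanical, with the only genuinely semantic step being the single use of the qualitative axiom $\mu \circ \delta = \id_1$ in the idempotence argument.
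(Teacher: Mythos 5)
Your proof is correct and matches the paper's (omitted) argument: the paper simply asserts the lemma follows ``using the relations in $R$'', and your diagrammatic verification --- coassociativity to merge the two $\delta$'s, naturality of $\gamma$ to bring the copies together, associativity/commutativity of $\mu$, and one application of $\mu\circ\delta=\id_1$ --- is exactly that computation, carried out entirely in the qualitative bicommutative bialgebra fragment as you note.
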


\noindent
The above lemma implies that given $I=\set{i_0,\ldots,i_{k-1}}\subseteq\intset{n}$,
it makes sense to define the morphism
\[
W^n_I\qeq W^n_{i_{k-1}}\circ\ldots\circ W^n_{i_1}\circ W^n_{i_0}\qcolon n+1 \qto n+1
\]
with, by convention, $W^n_\emptyset=I^n$. We also define for $n\in\N$ and
$I\subseteq[n]$, $X^n_I=S^n\circ W^n_I\circ H^n:n\to n+1$. The following lemma
justifies that we use the same notation for this morphism of $\TP$ and the
morphism of $\P$ introduced in Section~\ref{sec:cf}.

\begin{lemma}
  Given $n\in\N$ and $I\subseteq\intset{n}$, we have $\intp{X^n_I}=X^n_I$.
\end{lemma}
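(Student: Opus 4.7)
The plan is a direct computation of $\intp{X^n_I}=\intp{S^n\circ W^n_I\circ H^n}$ in $\P$. Among the generators, only $\sigma$ produces an internal event when interpreted; the others are each the image of an obvious relation under the faithful functor $\Rel\to\P$ of Lemma~\ref{lemma:rel-poset}. I would therefore first compute the ``relational part'' $\intp{W^n_I\circ H^n}$, which lies entirely in the image of $\Rel\to\P$, and only in the final composition with $\intp{S^n}$ need to track the internal event $x$.

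For the relational part, I would trace through the definitions to identify $\intp{H^n}=\intp{\id_n\otimes\eta}$ with the image of the relation $\setof{(j,j)}{j\in\intset{n}}:n\to n+1$, and $\intp{W^n_i}$---which performs ``duplicate position $i$, braid the copy to position $n$, merge with position $n$''---with the image of $\setof{(j,j)}{j\in\intset{n+1}}\cup\set{(i,n)}$. A short induction on $|I|$ then yields $\intp{W^n_I}$ as the image of $\setof{(j,j)}{j\in\intset{n+1}}\cup\setof{(i,n)}{i\in I}$. Composing inside $\Rel$ and applying Lemma~\ref{lemma:rel-poset}, $\intp{W^n_I\circ H^n}$ is the morphism of $\P$ with disjoint source events $s(j)$ ($j\in\intset{n}$) and target events $t(j)$ ($j\in\intset{n+1}$), no internal events, and partial order generated by $s(j)<t(j)$ for $j\in\intset{n}$ together with $s(i)<t(n)$ for $i\in I$.

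For the final step, $\intp{S^n}=\intp{\id_n\otimes\sigma}$ carries trivial structure at positions $\intset{n}$ plus three events $a<x<b$ from $\intp{\sigma}$, where $a$ is the source at position $n$, $b$ the target at position $n$, and $x$ the unique internal event. I would then compute the composite $\intp{S^n}\circ\intp{W^n_I\circ H^n}$ by pushout along $\intset{n+1}$: for $j\in\intset{n}$ the target $t(j)$ of the inner morphism is identified with the corresponding event of $\intp{\id_n}$ and survives as a target of the composite, while $t(n)$ is identified with $a$ and, being neither a source nor a target of the composite, disappears from the external events. Transitivity yields $s(i)<a<x$, hence $s(i)<x$ for $i\in I$, and $x<b$ survives; renaming $b$ as the target at position $n$ of the composite, the resulting morphism matches exactly the description of $X^n_I$ in Section~\ref{sec:cf}. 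The main care needed is in this last step: keeping precise track of which interface events are discarded in the pushout, and of the transitive relations inherited by the surviving events.
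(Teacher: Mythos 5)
Your computation is correct, and since the paper omits the proof of this lemma entirely, your direct verification --- reducing $\intp{W^n_I\circ H^n}$ to a composite of relations via the embedding $\Rel\to\P$ and then tracking the single internal event contributed by $\sigma$ through the final pushout --- is exactly the intended argument. You also correctly handle the one delicate point, namely that in the composite with $\intp{S^n}$ only the interface event glued to the source of $\intp{\sigma}$ genuinely disappears while the events at positions $j<n$ persist as targets, and that transitivity through the discarded event yields $s(i)<x$ for $i\in I$ and $x<t(n)$, matching the definition of $X^n_I$ in Section~\ref{sec:cf}.
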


\begin{example}
  For instance $X^3_{0,2}$ is as follows (notice the resemblance with
  Example~\ref{ex:X302}):
  \[
  X^3_{0,2}
  \qeq
  \strid[height=2cm]{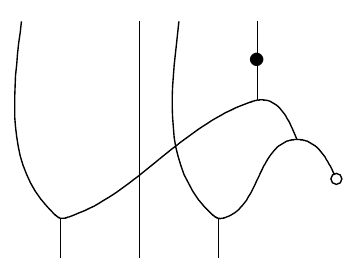}
  \]
\end{example}

\noindent
We say that a morphism $R$ is a \emph{relation} when it can be obtained by
composing and tensoring generators not involving~$\sigma$. The name is justified
by Proposition~\ref{prop:free-bialgebra} and the following lemma.

\begin{lemma}
  The canonical embedding $\Rel\to\TP$ is faithful.
\end{lemma}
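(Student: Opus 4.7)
The plan is to factor the canonical embedding $\Rel\to\TP$ through the interpretation $\intp{-}\colon\TP\to\P$ and then invoke the faithfulness already established in Lemma~\ref{lemma:rel-poset}. First I would make the embedding explicit: the rules of $R$ not mentioning $\sigma$ form exactly the theory of qualitative bicommutative bialgebras, which by Proposition~\ref{prop:free-bialgebra} presents $\Rel$. Inclusion of this sub-theory into $(\Sigma,R)$ therefore induces a monoidal functor $\Rel\to\TP$, whose image is precisely the class of morphisms defined as ``relations'' just above the lemma.

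Next, I would verify that the composite $\Rel\to\TP\to\P$ (the second arrow being $\intp{-}$) coincides with the functor of Lemma~\ref{lemma:rel-poset}. Both are monoidal functors that are the identity on objects, so since $\Rel$ is presented by the qualitative bicommutative bialgebra theory, it is enough to check that they agree on the five bialgebra generators $\eta,\mu,\varepsilon,\delta,\gamma$. This amounts to inspecting the Hasse diagrams of $\intp\eta,\intp\mu,\intp\varepsilon,\intp\delta,\intp\gamma$ displayed at the start of this section and observing that each is exactly the poset that Lemma~\ref{lemma:rel-poset} assigns to the corresponding obvious relation (\eg $\intp\mu$ is the $Y$-shape on $\intset{2}\uplus\intset{1}$ induced by the full relation $\set{(0,0),(1,0)}\colon 2\to 1$, and $\intp\gamma$ is induced by the swap $\set{(0,1),(1,0)}\colon 2\to 2$).

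Faithfulness of $\Rel\to\TP$ then follows at once: if two relations $R,S\colon m\to n$ are identified in $\TP$, then applying $\intp{-}$ shows that they have the same image in $\P$, and hence $R=S$ by faithfulness of the functor of Lemma~\ref{lemma:rel-poset}. I do not anticipate any serious obstacle; the only substantive step is the identification of the images of the bialgebra generators, and all the genuine categorical content has already been packaged into Proposition~\ref{prop:free-bialgebra} and Lemma~\ref{lemma:rel-poset}.
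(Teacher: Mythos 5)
Your argument is correct, but it is not the route the paper takes. You prove faithfulness semantically: you factor the canonical embedding as the composite of $\Rel\to\TP$ with $\intp{-}:\TP\to\P$, identify that composite with the faithful functor of Lemma~\ref{lemma:rel-poset} by checking agreement on the bialgebra generators (legitimate, since both functors are monoidal, identity on objects, and $\Rel$ is presented by the qualitative bicommutative bialgebra theory), and conclude that the first factor of a faithful composite is faithful. The paper instead argues syntactically: every rule of the poalgebra theory that mentions $\sigma$ has $\sigma$ on both sides, so the property of being $\sigma$-free is preserved by rewriting, and the $\sigma$-free rules are exactly those of the qualitative bicommutative bialgebra theory; hence two $\sigma$-free morphisms identified in $\TP$ are already identified by the bialgebra congruence, that is, equal in $\Rel$. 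Your route leans on the already-established interpretation $\intp{-}$ and on Lemma~\ref{lemma:rel-poset}, both available at this point in the text, and has the side benefit of making explicit that $\intp{-}$ restricted to relations is the standard embedding $\Rel\to\P$ --- a fact the paper uses without comment in the proof of Lemma~\ref{lemma:tp-p-full}. The paper's route is self-contained within the rewriting system (a conservativity argument) and additionally characterizes the image of the embedding as exactly the $\sigma$-free morphisms. The one small point you should state explicitly is that the functor of Lemma~\ref{lemma:rel-poset} is itself a symmetric monoidal functor, since that is what licenses comparing the two functors on generators only.
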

\begin{proof}
  By definition of~$\TP$, this category contains a qualitative bicommutative
  bialgebra and therefore there is a canonical embedding functor $\Rel\to\TP$ by
  Proposition~\ref{prop:free-bialgebra}, whose image consists of morphisms which
  can be obtained by composing and tensoring generators not involving
  $\sigma$. Moreover, the property of ``not containing a generator~$\sigma$'' in
  a morphism of~$\TP$ is preserved by the axioms of poalgebras and the axioms of
  poalgebras not involving~$\sigma$ are precisely the axioms of qualitative
  bicommutative bialgebras.
\end{proof}

\begin{lemma}
  \label{lemma:tp-p-full}
  The canonical functor $\intp{-}:\TP\to\P$ is full.
\end{lemma}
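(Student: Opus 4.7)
The plan is to show surjectivity of $\intp{-}$ on hom-sets by lifting the canonical factorization of Section~\ref{sec:cf} from $\P$ to $\TP$. Given any morphism $f:m\to n$ in $\P$, first pick a linearization of its internal events (which exists by the lemma guaranteeing that every finite poset has a linearization). Lemma~\ref{lemma:poset-cf-fact} then produces a factorization
\[
f\qeq R \circ X^{m+k-1}_{I_{k-1}}\circ\ldots\circ X^{m+1}_{I_1}\circ X^m_{I_0}
\]
in $\P$, in which $R:m+k\to n$ is a relation and each $X^{m+j}_{I_j}$ is one of the basic building blocks of Section~\ref{sec:cf}.

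Next I would lift each factor to $\TP$. The morphisms $X^{m+j}_{I_j}$ lift by the previous lemma, which asserts $\intp{X^n_I}=X^n_I$, the left-hand side being the morphism of $\TP$ built from $\eta,\mu,\delta,\sigma,\gamma$. The relation $R$ lifts, via the canonical faithful embedding $\Rel\to\TP$ of the previous lemma, to some $\tilde R\in\TP(m+k,n)$. Granted that $\intp{\tilde R}$ coincides with $R$ viewed as a morphism of $\P$ through Lemma~\ref{lemma:rel-poset}, functoriality of $\intp{-}$ gives
\[
\intp{\tilde R\circ X^{m+k-1}_{I_{k-1}}\circ\ldots\circ X^m_{I_0}}\qeq f,
\]
exhibiting the required preimage and establishing fullness.

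The hard part is this last coherence between the two embeddings of $\Rel$ into $\P$: the composite $\intp{-}\circ(\Rel\to\TP)$ on one hand, and the embedding $\Rel\to\P$ of Lemma~\ref{lemma:rel-poset} on the other. Both are monoidal functors from $\Rel$ to $\P$ and, by Proposition~\ref{prop:free-bialgebra}, $\Rel$ is the PROP freely generated by a qualitative bicommutative bialgebra; hence two such functors are equal as soon as they agree on the generators $\eta,\mu,\varepsilon,\delta,\gamma$. A direct inspection of the definitions shows that both send each of these generators to the corresponding morphism of $\P$ depicted at the beginning of the section, so they agree. Substituting this identification into the displayed formula above completes the argument.
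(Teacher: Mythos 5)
Your proposal is correct and follows essentially the same route as the paper: factor the given morphism of $\P$ via Lemma~\ref{lemma:poset-cf-fact} into a relation postcomposed with the building blocks $X^{m+j}_{I_j}$, lift each factor to $\TP$ using $\intp{X^n_I}=X^n_I$ and the embedding of $\Rel$, and conclude by functoriality. The paper leaves the compatibility of the two embeddings of $\Rel$ implicit, so your explicit justification via the freeness of $\Rel$ as the PROP of qualitative bicommutative bialgebras is a welcome (and sound) elaboration rather than a divergence.
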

\begin{proof}
  Suppose given a morphism $(s,E,t):m\to n$. By
  Lemma~\ref{lemma:poset-cf-fact}, the morphism admits a factorization of the
  form~\eqref{eq:poset-fact}. Since for any relation $R$ in $\TP$, $\intp{R}$ is
  the corresponding relation in $\P$, and we have $\intp{X^n_I}=X^n_I$, the
  morphism $E$ is in the image of $\intp{-}$.
\end{proof}

\begin{lemma}
  \label{lemma:tp-cf}
  Every morphism $f:m\to n$ of $\TP$ can be factored as
  \begin{equation}
    \label{eq:tp-nf}
    R\circ X^{m+k-1}_{I_{k-1}}\circ\ldots\circ X^{m+1}_{I_1}\circ X^m_{I_0}
  \end{equation}
  for some $k\in\N$ with $I_j\subseteq[n+j]$ for every $j\in[k]$ and $R:m+k\to
  n$ is a relation.
\end{lemma}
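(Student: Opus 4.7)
The plan is to induct on the length of the slice-decomposition given by Lemma~\ref{lemma:free-moncat-slice}: write $f = g_\ell \circ \cdots \circ g_1$ with each $g_i = \id_{A_i} \otimes \alpha_i \otimes \id_{B_i}$ for some morphism generator $\alpha_i$, and induct on $\ell$. The base case $\ell = 0$ (so $f = \id_m$) is handled by taking $k = 0$ and $R = \id_m$. For the inductive step, I would split off the topmost slice, writing $f = g \circ f'$ with $g = \id_a \otimes \alpha \otimes \id_b$, and apply the induction hypothesis to $f'$ to obtain a factorization $f' = R' \circ X^{m+k'-1}_{I'_{k'-1}} \circ \cdots \circ X^m_{I'_0}$. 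If $\alpha$ is any generator other than $\sigma$, i.e.\ $\alpha \in \{\eta, \mu, \varepsilon, \delta, \gamma\}$, then $g$ is itself a relation, $g \circ R'$ is a relation, and absorbing it into the top relation factor while leaving the $X$'s unchanged yields the desired form.

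The only substantive case is $\alpha = \sigma$, where it suffices to establish the following identity: for any relation $R' : p \to n$ (with $p = m+k'$) and any $a \in [n]$,
\[
(\id_a \otimes \sigma \otimes \id_{n-a-1}) \circ R' \qeq R \circ X^p_J
\]
in $\TP$, where $J = \setof{i \in [p]}{(i,a) \in R'}$ (viewing $R'$ as an element of $\Rel$ via the faithful embedding established just before Lemma~\ref{lemma:tp-p-full}) and $R : p+1 \to n$ is the relation $\setof{(i,j)}{i < p,\ (i,j) \in R',\ j \neq a} \cup \{(p,a)\}$. Granted this identity, setting $I_{k'} = J$ and replacing the top relation of $f'$ by $R$ gives the required factorization of $f$.

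The main obstacle is proving this identity formally inside $\TP$. Semantically it is transparent, as both sides send the $a$-th output to $\sigma$ of the $\mu$-merge of the inputs in $J$ and leave every other output exactly as in $R'$; the task is to justify it using only the equational theory. My strategy is to apply the transitivity axiom $\sigma = \mu \circ (\id \otimes \sigma) \circ \delta$ on strand $a$ of the left-hand side, thereby introducing a $\delta$ between $R'$ and the $\sigma$ and a $\mu$ above it; then use the naturality of $\sigma$ with respect to $\gamma$ (the only axioms available to move a $\sigma$) to transport the remaining $\sigma$ to the rightmost strand, at the cost of a pre- and a post-braiding; and finally absorb the new $\delta$, the braidings, and the residual $\mu$ into combined relation factors, invoking faithfulness of $\Rel \hookrightarrow \TP$ to identify these factors with the ones obtained by expanding $X^p_J = S^p \circ W^p_J \circ H^p$. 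The design of $X^p_J$ ensures that its newly created last strand is exactly the $\mu$-merge of the inputs indexed by $J$, which is precisely what the transported $\sigma$ acts upon, so that once $\Rel$-faithfulness is invoked for the purely relational parts, the two sides are forced to coincide.
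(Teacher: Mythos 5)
Your proof is correct and shares the paper's overall skeleton---induction on the number of generators via Lemma~\ref{lemma:free-moncat-slice} followed by a case analysis on the generator---but you peel the slice off the opposite end: the paper factors $f=f'\circ(\id_i\otimes\alpha\otimes\id_j)$ and must push the generator \emph{upward} through the whole stack of $X$'s, with a separate commutation computation for each of $\eta,\mu,\varepsilon,\delta,\gamma,\sigma$, whereas you factor $f=(\id_a\otimes\alpha\otimes\id_b)\circ f'$, so that every generator other than $\sigma$ is absorbed into the top relation for free and all the work is concentrated in a single identity for $\sigma$. That identity is indeed the crux, and your strategy for it (one application of transitivity, $\gamma$-naturality of $\sigma$, then collapsing the $\sigma$-free pieces) works, with two corrections. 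First, what you need about the $\sigma$-free fragment is not the \emph{faithfulness} of $\Rel\to\TP$ but merely its functoriality, i.e.\ the completeness half of Proposition~\ref{prop:free-bialgebra}: two $\sigma$-free expressions denoting the same relation are already identified by the qualitative bicommutative bialgebra axioms and hence equal in $\TP$; faithfulness is the converse implication and is not what your argument uses. Second, the relation that actually falls out of your computation is not your pruned $R$ but $\tilde R=R'\cup\set{(p,a)}$: after rewriting $\sigma=\mu\circ(\id_1\otimes\sigma)\circ\delta$ on strand $a$, sliding $\sigma$ to the last strand, and using the interchange law in the form $(\id_n\otimes\sigma)\circ(R'\otimes\id_1)=(R'\otimes\id_1)\circ(\id_p\otimes\sigma)$, one obtains $(\id_a\otimes\sigma\otimes\id_{n-a-1})\circ R'=\tilde R\circ X^p_J$ directly, because the $\delta$ introduced by transitivity retains a direct copy of the pairs $(i,a)$ with $i\in J$. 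This $\tilde R$ is perfectly acceptable for the present lemma (which does not require the factorization to be transitive---that is only imposed in Lemma~\ref{lemma:TP-trans-fact}), so you should either state the identity with $\tilde R$ or be prepared to spend further applications of the transitivity axiom to delete the edges parallel to the path through the new internal node and reach your $R$.
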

\begin{proof}
  We proceed by induction on the number of generators occurring in~$f$. If $f$
  is the identity, it has no internal event, \ie it is a relation as per
  Lemma~\ref{lemma:rel-poset}; it is thus of the form~\eqref{eq:tp-nf} with
  $k=0$. Otherwise, by Lemma~\ref{lemma:free-moncat-slice}, the morphism $f$ can
  be factored as $f=f'\circ(\id_i\otimes\alpha\otimes\id_j)$ with $i,j\in\N$ and
  $\alpha$ a morphism generator. By induction hypothesis, we know that $f'$
  admits a factorization of the form~\eqref{eq:tp-nf}, and we show that $f$ then
  admits a factorization~\eqref{eq:tp-nf} by case analysis on~$i$ and~$\alpha$.
\end{proof}

\noindent
We say that a morphism of the form~\eqref{eq:tp-nf} is \emph{transitive} when it
satisfies the conditions for transitivity of Definition~\ref{def:cf-trans}.

\begin{lemma}
  \label{lemma:TP-trans-fact}
  Every morphism $f:m\to n$ of $\TP$ admits a factorization of the
  form~\eqref{eq:tp-nf} which is transitive. Such a factorization is called
  \emph{canonical}.
\end{lemma}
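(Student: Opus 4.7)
My plan is to strengthen the induction from the proof of Lemma~\ref{lemma:tp-cf} so that at every step the factorization produced is not only of the form~\eqref{eq:tp-nf} but transitive in the sense of Definition~\ref{def:cf-trans}. Following that argument, write $f = f' \circ (\id_i \otimes \alpha \otimes \id_j)$ with $\alpha$ a generator, and assume inductively that $f'$ admits a transitive factorization $R \circ X^{m+k-1}_{I_{k-1}} \circ \cdots \circ X^m_{I_0}$. One then does a case analysis on~$\alpha$.

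When $\alpha \in \{\eta,\mu,\varepsilon,\delta,\gamma\}$, no new internal event is introduced: the sets $I_j$ are unchanged and $R$ is replaced by the relational composite $R'$ incorporating $\alpha$. The first transitivity condition of Definition~\ref{def:cf-trans} is preserved verbatim, and the second follows easily from the definition of $R'$ together with transitivity of the old $R$. The delicate case is $\alpha = \sigma$, where a new internal event $x(k)$ appears and the factorization must be extended with an $X^{m+k}_{I_k}$ block at the top. I would take $I_k$ to be the set of indices in $[m+k]$ whose outputs are connected to the new $\sigma$-input through the old relation~$R$, and define a new relation~$R'$ by removing the column of $R$ consumed by $\sigma$ and replacing it by the column coming out of $x(k)$. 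The required equality in~$\TP$
\[
(\id_i \otimes \sigma \otimes \id_j) \circ R \circ X^{m+k-1}_{I_{k-1}} \circ \cdots \circ X^m_{I_0}
\qeq
R' \circ X^{m+k}_{I_k} \circ X^{m+k-1}_{I_{k-1}} \circ \cdots \circ X^m_{I_0}
\]
is exactly what the transitivity axiom $\mu \circ (\id_1 \otimes \sigma) \circ \delta = \sigma$ is designed to produce, combined with the bialgebra and naturality rules used to reroute duplicated wires through $R$. Transitivity at the new event $x(k)$ then reduces to the observation that $I_k$ has been defined as a composite through the already-transitive $R$.

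\paragraph{Main obstacle.} The technical heart of the argument lies in the $\sigma$-case, where one must exhibit an explicit derivation in $\TP$ from the composite $(\id_i \otimes \sigma \otimes \id_j) \circ R$ to $R' \circ X^{m+k}_{I_k}$. Intuitively, the axiom $\mu \circ (\id_1 \otimes \sigma) \circ \delta = \sigma$ lets one duplicate the input wire of $\sigma$ so that a copy is sent back through the relation part of the factorization, while the qualitative bicommutative bialgebra axioms --- in particular $\mu \circ \delta = \id_1$ and the naturality rules relating $\gamma$ to $\mu$ and $\delta$ --- are then used to reroute the extra copies until they match the shape prescribed by $W^{m+k}_{I_k}$ wrapped around $\sigma$. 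Carefully performing this rewriting, and keeping track of the index bookkeeping between $R$ and~$R'$, is where the bulk of the computational work of the proof lies.
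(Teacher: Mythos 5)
Your overall strategy---strengthening the induction of Lemma~\ref{lemma:tp-cf} so that transitivity is maintained at every step---is a legitimate alternative organization: the paper instead takes an arbitrary factorization of the form~\eqref{eq:tp-nf}, already provided by Lemma~\ref{lemma:tp-cf}, and repairs it \emph{a posteriori} by progressively adding transitivity edges using the transitivity axiom~\eqref{eq:transitivity}. Either route can work, and the core difficulty (where the axiom $\mu\circ(\id_1\otimes\sigma)\circ\delta=\sigma$ enters) is the same in both. However, as written, your construction in the $\sigma$-case does not produce a transitive factorization, so there is a genuine gap.

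The problem is your definition of $R'$ by ``removing the column of $R$ consumed by $\sigma$ and replacing it by the column coming out of $x(k)$''. Consider the simplest instance $f=\sigma:1\to 1$, so that $R=\set{(0,0)}$ is the identity relation, $k=0$, $I_0=\set{0}$, and your recipe gives $R'=\set{(1,0)}$. The equality $\sigma=R'\circ X^1_{\set{0}}$ does hold, but it only needs the counit axiom, and the resulting factorization violates the second condition of Definition~\ref{def:cf-trans}: we have $0\in I_0$ and $(1,0)\in R'$, yet $(0,0)\notin R'$. Since in the poset $\intp{\sigma}$ one has $s(0)<x(0)<t(0)$ and hence $s(0)<t(0)$, the transitive relation must \emph{retain} the old column entries in addition to the new one, \ie $R'=\set{(0,0),(1,0)}$. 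It is precisely this retention that costs you the transitivity axiom: with the larger $R'$, the composite $R'\circ X^1_{\set{0}}$ computes $\mu\circ(\id_1\otimes\sigma)\circ\delta$, which equals $\sigma$ only by~\eqref{eq:transitivity}. So you invoke the right axiom but in the wrong place: with your $R'$ the equality is free and transitivity fails; with the corrected $R'$ (keep every $(p,i'')$ with $p\in I_k$ whenever $(m+k,i'')\in R'$) transitivity holds and the axiom is what makes the equality true in $\TP$. Repairing this amounts to the very ``adding of transitivity edges'' that the paper's proof performs. A secondary issue: you peel the generator off the bottom, writing $f=f'\circ(\id_i\otimes\alpha\otimes\id_j)$, but your displayed equation and your analysis place $\alpha$ on top of $R$. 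With bottom peeling the new internal event sits at the bottom of the linearization, the sets $I_j$ are \emph{not} unchanged for $\alpha\in\set{\eta,\mu,\varepsilon,\delta,\gamma}$ (input indices are merged, split, deleted or permuted in every $I_j$), and the $\sigma$-case would require commuting a new block past all the existing $X^{m+j}_{I_j}$. You should commit to peeling from the top for the argument to read as you intend.
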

\begin{proof}
  Suppose given a morphism $f:m\to n$. By Lemma~\ref{lemma:tp-cf}, we know that
  it admits a factorization of the form~\eqref{eq:tp-nf}. This factorization can
  be shown to be equal to a transitive one by progressively adding
  ``transitivity edges'' (in the sense of Definition~\ref{def:cf-trans}), by
  using the fact that the $\TP$ contains a poalgebra, which in particular
  satisfies the transitivity axiom~\eqref{eq:transitivity}.
\end{proof}


\noindent
Similarly to Section~\ref{sec:cf} (see Lemma~\ref{lemma:P-switch}), the
canonical factorizations are not unique but one can go from one to another by
``switching'' two consecutive $X^n_I$ in the factorization, which will result in
precomposing the relation of the factorization by a transposition. The following
lemmas formalize this.

\begin{lemma}
  Given $n\in\N$, and $I,J\subseteq [n]$, we have $X^{n+1}_J\circ X^n_I=G^n\circ
  X^{n+1}_I\circ X^n_J$.
\end{lemma}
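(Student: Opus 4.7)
The plan is to verify this identity directly in $\TP$ by unfolding both sides into composites of the elementary generators and applying the symmetry-naturality axioms of a poalgebra together with the bialgebra laws. Recall that $X^k_K = S^k \circ W^k_K \circ H^k$, where $H^k$ and $S^k$ insert an $\eta$ and a $\sigma$ on the rightmost wire respectively, while $W^k_K$ is a product of duplication--braid--multiplication patterns $W^k_i$ connecting source wires indexed by $K$ to the freshly created rightmost wire. Both sides of the equation have two internal $\sigma$-nodes on the same boundary $n \to n{+}2$; what differs is the order of creation of these nodes and, correspondingly, the arrangement of the top two output wires, which is precisely what $G^n$ is there to correct on the right-hand side. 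The crucial structural fact is that since $I,J \subseteq [n]$, neither node has the other as a predecessor, so the two creations ought to commute up to a braiding.

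The key computational step is to slide $G^n$ downwards past $S^{n+1} \circ W^{n+1}_I \circ H^{n+1}$ on the right-hand side. It commutes with $S^{n+1}$ by the naturality rule $\gamma \circ (\sigma \otimes \id_1) = (\id_1 \otimes \sigma) \circ \gamma$; it passes through $H^{n+1}$ using $\gamma \circ (\id_1 \otimes \eta) = \eta \otimes \id_1$, which effectively relocates the newly-created wire from the right to the left of the previously-created $\sigma$-node; and it threads through each factor $W^{n+1}_i$ by the naturality rules for $\gamma$ with respect to $\mu$ and $\delta$. The hypothesis $I \subseteq [n]$ (so $n \notin I$) guarantees that $W^{n+1}_I$ only touches wires $0,\dots,n-1$ together with the top wire, leaving the wire at position $n$ (where the $J$-node was inserted) untouched, so the sliding is well-defined and introduces no obstruction.

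After this transport, the top piece $S^{n+1} \circ W^{n+1}_I \circ H^{n+1}$ now sits above the wire produced by the original $X^n_J$, which turns it into the contribution of the $I$-node placed on top of the $J$-node, exactly as on the left-hand side. A symmetric sliding handles the lower $X^n_J$ piece, and the remaining identities reduce to associativity, coassociativity, and commutativity of the bicommutative bialgebra. The main obstacle is the bookkeeping for the composite $W^{n+1}_I$, since it is a product of $|I|$ pieces each involving several $\delta$'s, $\mu$'s and braidings on distinct wires; the cleanest way I would organise this is by induction on $|I|$ (with the case $I = \emptyset$ reducing to the naturality of $\gamma$ past a single $\sigma \otimes \eta$ pair), so that at each step one only needs to commute $G^n$ past a single $W^{n+1}_i$, which follows from one application of each of the $\mu$- and $\delta$-naturality rules.
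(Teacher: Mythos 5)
Your proposal is correct and takes essentially the same route as the paper: both unfold $X^k_K$ as $S^k\circ W^k_K\circ H^k$ and commute the resulting pieces past one another via the exchange law and the naturality of $\gamma$ with respect to $\eta$, $\mu$, $\delta$ and $\sigma$ (the paper rewrites the left-hand side into the right-hand side, whereas you slide $G^n$ downward through the right-hand side, but this is the same equational chain read in the opposite direction). The one step deserving care --- commuting $W^{n+1}_J$ with $W^n_I\otimes\id_1$ when $I\cap J\neq\emptyset$, which genuinely needs coassociativity and cocommutativity on the shared source wires rather than the exchange law alone --- is covered by your appeal to the bicommutative bialgebra axioms.
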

\begin{proof}
  With the notations $\ol H^n=H^n\otimes\id_1$, $\ol S^n=S^n\otimes\id_1$, $\ol
  W^n_I=W^n_I\otimes\id_1$, we have
  \begin{align*}
    X^{n+1}_JX^n_I
    &\qeq S^{n+1}W^{n+1}_JH^{n+1}S^nW^n_IH^n&&\qeq \ol S^n\ol W^n_IS^{n+1}W^{n+1}_J\ol H^nH^n\\
    &\qeq S^{n+1}W^{n+1}_J\ol S^nH^{n+1}W^n_IH^n&&\qeq \ol S^n\ol W^n_IS^{n+1}\ol H^nW^n_JH^n\\
    &\qeq S^{n+1}W^{n+1}_J\ol S^n\ol W^n_IH^{n+1}H^n&&\qeq \ol S^n\ol W^n_IS^{n+1}\ol H^nW^n_JH^n\\
    &\qeq S^{n+1}W^{n+1}_J\ol S^n\ol W^n_I\ol H^nH^n&&\qeq \ol S^n\ol W^n_I\ol H^nS^nW^n_JH^n\\
    &\qeq S^{n+1}\ol S^nW^{n+1}_J\ol W^n_I\ol H^nH^n&&\qeq \ol S^n\ol W^n_IG^nH^{n+1}S^nW^n_JH^n\\
    &\qeq \ol S^nS^{n+1}W^{n+1}_J\ol W^n_I\ol H^nH^n&&\qeq \ol S^nG^nW^{n+1}_IH^{n+1}S^nW^n_JH^n\\
    &\qeq \ol S^nS^{n+1}\ol W^n_IW^{n+1}_J\ol H^nH^n&&\qeq G^nX^{n+1}_IX^n_J
  \end{align*}
  Each of the equalities involved are direct to show by using the definition of
  the morphisms and the axioms symmetric monoidal categories.
  For instance, one can show that $H^{n+1}S^n=\ol S^nH^{n+1}$ using the exchange
  law of monoidal categories: $
  H^{n+1}S^n=(n+1\otimes\eta)\circ(n\otimes\sigma)=(n\otimes\sigma\otimes
  1)\circ(n\otimes\eta)=\ol S^nH^{n+1} $.  Other steps are similar.
\end{proof}

\begin{lemma}
  We write $G^n_i=i\otimes\gamma\otimes(n-i)$ (so that $G^n=G^n_n$). Given
  $n\in\N$, $i\in[n+1]$ and $I\subseteq[n+1]$, we have
  $X^{n+2}_I\circ G^n_i=G^n_i\circ X^{n+1}_{\tau^{n+1}_i(I)}$.
\end{lemma}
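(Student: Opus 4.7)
The plan is to expand both sides using the definition $X^m_I = S^m\circ W^m_I\circ H^m$ and push the symmetry $G^n_i$ through each of the three factors $S$, $W$, $H$ in turn. The commutation of $G^n_i$ with $H^m = \id_m\otimes\eta$ and $S^m = \id_m\otimes\sigma$ is immediate from the exchange law of monoidal categories, since these morphisms act only on the rightmost wire while $G^n_i = \id_i\otimes\gamma\otimes\id_{n-i}$ acts on wires $i$ and $i+1$, which lie strictly to the left; this is the same style of computation used to prove the previous lemma.

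The substantive step is to push $G^n_i$ past $W^m_J$. Using the preceding lemma, one has $W^m_J = W^m_{j_{k-1}}\circ\ldots\circ W^m_{j_0}$ and the individual factors pairwise commute, so it suffices to establish the pointwise intertwining
\[
G^n_i \circ W^m_j \circ G^n_i \qeq W^m_{\tau^m_i(j)}
\]
for each index $j$, and then extend to an arbitrary $J$ by composition together with the involutivity of $G^n_i$. For $j\notin\set{i,i+1}$ this is again the exchange law: $W^m_j$ only touches wire $j$ and the rightmost wire, so it commutes with $G^n_i$, and we have $\tau^m_i(j) = j$. The heart of the argument is the case $j\in\set{i,i+1}$, where one unfolds
\[
W^m_j = (\id_m\otimes\mu)\circ(\id_{j+1}\otimes\gamma_{1,m-j-1}\otimes\id_1)\circ(\id_j\otimes\delta\otimes\id_{m-j})
\]
and threads the $\gamma$ of $G^n_i$ through $\delta$, the internal symmetry block $\gamma_{1,m-j-1}$, and $\mu$ using the naturality of $\delta$, $\mu$, and $\gamma$ with respect to the symmetry, together with the Yang--Baxter relation. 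All of these are rules of the poalgebra theory. Conceptually, conjugating $W^m_i$ by the swap $G^n_i$ transfers the role of ``duplicated wire'' from wire $i$ to wire $i+1$, producing exactly $W^m_{i+1}$.

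The main obstacle is the diagrammatic bookkeeping in this last case: the fresh $\gamma$ coming from $G^n_i$ has to be commuted past the pre-existing symmetry $\gamma_{1,m-j-1}$ and then past the copy/multiplication structure defining $W^m_j$. This is entirely routine given the symmetric monoidal axioms and the naturality rules supplied by the presentation, but requires a careful step-by-step unfolding analogous to the chain of equalities exhibited in the previous lemma.
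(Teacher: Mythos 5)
The paper omits its own proof of this lemma (it is one of the ``easy but not necessarily short'' inductions left to the reader), so there is no official argument to compare against line by line; your strategy, however, is exactly the one the paper deploys for the adjacent lemma $X^{n+1}_J\circ X^n_I=G^n\circ X^{n+1}_I\circ X^n_J$ --- unfold $X^m_I=S^m\circ W^m_I\circ H^m$ and push the symmetry through each factor --- and it is correct in substance. Two points deserve more care than you give them. First, the commutations with $H^{n+2}$ and $S^{n+2}$, while immediate from the exchange law, change the ambient width: $H^{n+2}\circ G^n_i=(G^n_i\otimes\id_1)\circ H^{n+2}=G^{n+1}_i\circ H^{n+2}$, so the symmetry that emerges on the far side of $X$ acts on one more wire than the one that went in (the lemma's statement elides this --- as written its two sides do not even have matching arities --- and the later switching lemma needs the emerging $G$ at the correct width so that it can eventually be absorbed into the relation $R$ as a precomposition by a transposition); your proof should track this explicitly. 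Second, for $j\notin\set{i,i+1}$ with $j<i$ the commutation $W^m_j\circ G=G\circ W^m_j$ is \emph{not} purely the exchange law: the expression defining $W^m_j$ contains the crossing $\gamma_{1,m-j-1}$, which passes over wires $i$ and $i+1$, so you also need the naturality rules for $\gamma$ supplied by the presentation, not just disjointness of supports. The heart of the matter, $W^m_i\circ G=G\circ W^m_{i+1}$, which you assert but do not carry out, is indeed correct (semantically both sides are the relation $\tau_i\cup\set{(i+1,m)}$) and, as you say, derivable in $\TP$ from the naturality rules for $\delta$, $\mu$, $\gamma$ together with Yang--Baxter and $\gamma\circ\gamma=\id_2$; since equality must be established syntactically in $\TP$ and not merely checked in $\P$, this chain of rewrites is the actual content of the lemma, and leaving it as ``routine bookkeeping'' is acceptable only to the same degree that the paper's own omission is.
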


\noindent
As a direct corollary of the two previous lemmas, we have

\begin{lemma}
  \label{lemma:TP-switch}
  Given $k\in\N$, sets $I_j\subseteq[n+j]$ indexed by $j\in[k]$, a relation
  $R:m+k\to n$, and $i\in[k-1]$ such that $m+i\not\in I_{i+1}$, we have $R\circ
  X^{m+k-1}_{I_{k-1}}\circ\ldots\circ X^m_{I_0} = R'\circ
  X^{m+k-1}_{I'_{k-1}}\circ\ldots\circ X^m_{I'_0}$ with $I'_j=I_j$ for $0\leq
  j<i+1$, $I'_j=\tau^{m+j}_{m+i}(I_j)$ for $i+1\leq j<k$, and
  $R'=R\circ\tau^{m+k}_{m+i}$.
\end{lemma}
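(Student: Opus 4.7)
The plan is to derive this lemma directly from the two preceding swap lemmas: apply the $X$--$X$ swap at position $i$, then bubble the resulting symmetry $G$ upward through the remaining $X$-blocks, and finally absorb it into the relation $R$ on top.

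First I would observe that the hypothesis $m+i\notin I_{i+1}$, together with $I_{i+1}\subseteq[m+i+1]$, gives $I_{i+1}\subseteq[m+i]$, so the preceding $X$--$X$ swap lemma is applicable at $n=m+i$ to the adjacent pair, yielding
\[
X^{m+i+1}_{I_{i+1}}\circ X^{m+i}_{I_i}
=
G^{m+i}\circ X^{m+i+1}_{I_i}\circ X^{m+i}_{I_{i+1}}.
\]
Substituting this into the full factorization leaves all blocks at positions strictly below the swap site unchanged and inserts a copy of $G^{m+i}$ just above position $i+1$, already matching the claim on the lower range of the new factorization.

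Next, I would propagate $G^{m+i}$ upward past each of the remaining $X$-blocks $X^{m+j}_{I_j}$ for $j=i+2,\dots,k-1$. At height $j$ the symmetry acts on the two strands indexed $m+i$ and $m+i+1$ among $m+j$ total strands, so it plays the role of $G^{m+j-2}_{m+i}$, and the preceding $G$--$X$ swap lemma applies. Each such crossing conjugates the index set by $\tau^{m+j}_{m+i}$, which yields the claimed $I'_j=\tau^{m+j}_{m+i}(I_j)$ above the swap site.

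Finally, after all crossings the residual $G$ sits just below $R$, acting as the transposition of the two strands indexed $m+i$ and $m+i+1$ on the $m+k$-fold tensor. Since $R$ is a relation and the canonical embedding $\Rel\hookrightarrow\TP$ interprets $\gamma$ as genuine transposition, composing $R$ with this residual produces $R'=R\circ\tau^{m+k}_{m+i}$. The only real bookkeeping effort is keeping track of the ambient width of $G$ as it ascends and confirming that the successive transpositions act with the stated indices; the rest is a direct assembly of the two prior lemmas.
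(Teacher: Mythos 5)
Your argument is exactly the paper's: the lemma is stated there only as ``a direct corollary of the two previous lemmas'', and your swap--propagate--absorb chain is precisely that corollary written out, with the hypothesis $m+i\notin I_{i+1}$ correctly used to bring $I_{i+1}$ into $[m+i]$ so the $X$--$X$ swap applies, and with the ambient width of the ascending symmetry tracked correctly so that each crossing contributes $\tau^{m+j}_{m+i}$ and the residue absorbs into $R$ as $R\circ\tau^{m+k}_{m+i}$. One caveat: the initial swap actually \emph{exchanges} the two boundary sets, yielding $I'_i=I_{i+1}$ and $I'_{i+1}=I_i$, not the stated $I'_i=I_i$ and $I'_{i+1}=\tau^{m+i+1}_{m+i}(I_{i+1})$ (which under the hypothesis degenerates to $I_{i+1}$), so your remark that the lower range ``already match[es] the claim'' is not literally true --- the discrepancy is an indexing slip in the lemma's own formulas (inherited from Lemma~\ref{lemma:P-switch}, where the same computation gives $I'_i=I_{i+1}$ and $I'_{i+1}=I_i$) rather than a flaw in your construction, but it should be flagged and the statement read with the two boundary entries exchanged rather than asserted to match.
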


\noindent
Finally, we are now able to show that the axioms of poalgebras are a
presentation of the category~$\P$.

\begin{theorem}
  \label{thm:p-qtbbe}
  The categories~$\TP$ and~$\P$ are isomorphic as symmetric monoidal categories:
  the category $\P$ is presented by the theory of poalgebras.
\end{theorem}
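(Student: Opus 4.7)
The plan is to follow Lafont's recipe outlined at the end of Section~\ref{sec:presenting}. Since the interpretation functor $\intp{-}:\TP\to\P$ is already defined, bijective on objects, and has been shown to be full by Lemma~\ref{lemma:tp-p-full}, what remains is to establish faithfulness. The canonical factorizations developed in Sections~\ref{sec:cf} and here in Section~\ref{sec:pres} are designed precisely so that the transitive factorizations of the form~\eqref{eq:tp-nf} in $\TP$ can play the role of ``canonical forms'' for the equivalence classes of $\Sigma^*/{\equiv_R}$.

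Concretely, I would take two morphisms $f,g:m\to n$ of $\TP$ with $\intp{f}=\intp{g}$ and show $f=g$. By Lemma~\ref{lemma:TP-trans-fact} each admits a transitive factorization, say
\[
f=R\circ X^{m+k-1}_{I_{k-1}}\circ\ldots\circ X^m_{I_0},
\qquad
g=R'\circ X^{m+k'-1}_{I'_{k'-1}}\circ\ldots\circ X^m_{I'_0}.
\]
Applying $\intp{-}$, and using that $\intp{X^n_I}$ is the poset-block of Section~\ref{sec:cf} and that relations embed faithfully, Lemma~\ref{lemma:poset-cf-comp} identifies each side with a morphism of $\P$ together with the induced linearization of its internal events. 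Since $\intp{f}=\intp{g}$, the underlying morphism of $\P$ is the same, which forces $k=k'$ (the number of internal events is an invariant) and yields two linearizations $x,x'$ of the same set of internal events. By Proposition~\ref{prop:poset-lin-cf}, transitive factorizations are in bijection with pairs (morphism, linearization), so if $x=x'$ the two factorizations coincide term-for-term, giving $f=g$ directly.

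The remaining case is when $x\neq x'$. Here I would invoke Lemma~\ref{lemma:lin-swap}: any two linearizations are connected by a finite sequence of elementary swaps $\sim$. It therefore suffices to treat one swap at a time. Lemma~\ref{lemma:P-switch} describes precisely how the data $(R,I_0,\ldots,I_{k-1})$ transforms under a single swap at position $i$ on the $\P$-side; the matching Lemma~\ref{lemma:TP-switch} shows that \emph{the very same} transformation is an equality in $\TP$. Chaining these elementary switches in $\TP$, I can convert the factorization of $g$ into one whose induced linearization equals that of $f$, at which point the previous paragraph applies and gives $f=g$. Combined with fullness (Lemma~\ref{lemma:tp-p-full}) and bijectivity on objects, $\intp{-}$ is an isomorphism of categories; its compatibility with tensor and symmetry is immediate from the definition of the interpretation on generators, hence the isomorphism is monoidal.

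The main obstacle I expect is the hinge between the two sides, namely the precise compatibility between Lemma~\ref{lemma:P-switch} (on the poset side) and Lemma~\ref{lemma:TP-switch} (on the $\TP$ side): one must check that the transposition $\tau^{m+k}_{m+i}$ acts identically on the relation component in both statements, and that the condition $m+i\notin I_{i+1}$ in Lemma~\ref{lemma:TP-switch}, which authorises the switch in $\TP$, is exactly the condition that the two consecutive events can be swapped in a linearization (i.e.\ that they are independent in the poset). Once this correspondence is recognised, the rest of the argument is bookkeeping along Lafont's recipe.
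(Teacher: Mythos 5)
Your proposal is correct and follows essentially the same route as the paper's own proof: establish fullness via Lemma~\ref{lemma:tp-p-full}, reduce faithfulness to comparing canonical (transitive) factorizations via Lemma~\ref{lemma:TP-trans-fact} and Proposition~\ref{prop:poset-lin-cf}, and then bridge the two factorizations by chaining elementary swaps using Lemma~\ref{lemma:lin-swap}, Lemma~\ref{lemma:P-switch} and Lemma~\ref{lemma:TP-switch}. The ``hinge'' you flag (matching the transposition on the relation component and the condition $m+i\notin I_{i+1}$ with independence of consecutive internal events) is exactly the compatibility the paper relies on implicitly.
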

\begin{proof}
  The canonical functor $\intp{-}:\TP\to\P$ is bijective on objects and was
  shown to be full in Lemma~\ref{lemma:tp-p-full}. In order to conclude, we have
  to show that it is faithful. Suppose given two morphisms $f,g:m\to n$ of
  $\TP$, such that $\intp f=\intp g$. By Lemma~\ref{lemma:TP-trans-fact}, we can
  suppose that they are written as a canonical factorization, and their images
  by $\intp{-}$ are thus canonical factorizations in~$\P$, in the sense of
  Definition~\ref{def:cf-trans}. They therefore correspond to two linearizations
  of the same poset. By Lemma~\ref{lemma:lin-swap}, we can transform one
  permutation into the other by a series of commutations of consecutive
  elements, each one corresponding to transforming the canonical form in~$\P$ in
  the way described in Lemma~\ref{lemma:P-switch}. We write $\intp
  f=E_1=E_2=\ldots=E_n=\intp g$ for this sequence of canonical forms. By
  Lemma~\ref{lemma:TP-switch}, we can mimic these transformations on $f$,
  obtaining a sequence of canonical factorizations $f=f_1=f_2=\ldots=f_n=g$
  in~$\TP$ such that for every $i$ we have~$\intp{f_i}=E_i$. Therefore $f=g$ and
  the functor~$\intp-$ is faithful.
\end{proof}

\section{Conclusion}
\label{sec:concl}
We have shown that the monoidal category $\P$ of partial orders, is the theory
of poalgebras. This monoidal category contains as subcategories various
well-known categories such as the simplicial category $\Delta$, the category
$\category{F}$ of functions, the category $\Rel$ of relations, etc. and the
presentations of those categories (see Section~\ref{sec:poalgebras}) can thus be
recovered as particular cases of our result.

Many interesting variants of this result can be obtained along the same
lines. For instance, by adding the axioms $\varepsilon\circ\sigma=\varepsilon$
and $\sigma\circ\eta=\eta$ to the theory of poalgebra, we obtain a presentation
of the subcategory of $\P$ whose morphisms are of the form $(s,E,t)$ with $s$
and $t$ surjective. Similarly, by removing the transitivity
axiom~\eqref{eq:transitivity} we can present a variant of the category where
morphisms are partial orders which are not supposed to be transitive, which can
be seen as directed acyclic graph, and thus recover the result of Fiore and
Devesas Campos~\cite{fiore2013algebra}. A variant of this theory, where the
morphism~$\sigma$ has been replaced by a family of generators with various
number of inputs and outputs, has also been considered in~\cite{beauxis2011non}
in order to provide an axiomatic definition of the notion of ``network''.

This work should be considered as a first step and can be generalized in various
promising ways. This presentation can be extended from posets to \emph{event
  structures}~\cite{winskel1995models}, which are partial orders equipped with a
notion of incompatibility, thus allowing us to hope for algebraic methods in
order to study concurrent processes, which those structures naturally
model. Another very interesting direction consists in exploring
higher-dimensional structures: for instance, we would like to extend this work
to construct a presentation of the monoidal bicategory of integers, partial
orders and increasing functions. In particular the full monoidal subbicategory
whose 1-cells are forests (\ie partial orders such that two incomparable
elements do not have an upper bound) is expected to be presented by (a variant
of) the theory of commutative strong monads, and to be closely related to
Moerdijk's dendroidal sets~\cite{moerdijk2010dendroidal}.

\bibliographystyle{eptcs}
\bibliography{papers}
\end{document}